\newtheorem{lemma}{Lemma}
\newtheorem{proposition}{Proposition}
\newtheorem{theorem}{Theorem}
\theoremstyle{remark}
\newtheorem{remark}{Remark}
\def\ds{\displaystyle}
\def\bara{B\'ar\'any}
\def\cara{Carath\'eodory}
\def\R{\mathbb{R}}
\def\Q{\mathbb{Q}}
\def\Z{\mathbb{Z}}
\def\S{\mathbf{S}}
\def\K{\mathsf{K}}
\def\conv{\operatorname{conv}}
\def\supp{\operatorname{supp}}
\def\cone{\operatorname{cone}}
\def\com{\operatorname{comatrix}}
\def\zero{{\bf 0}}
\def\aff{\operatorname{aff}}
\def\pdcs{positively dependent colorful set}
\def\p{\boldsymbol{p}}
\def\x{\boldsymbol{x}}
\def\y{\boldsymbol{y}}
\def\z{\boldsymbol{z}}
\def\s{\boldsymbol{s}}
\def\t{\boldsymbol{t}}
\def\b{\boldsymbol{b}}
\def\c{\boldsymbol{c}}
\def\u{\boldsymbol{u}}
\def\r{\boldsymbol{r}}
\def\vv{\boldsymbol{v}}
\def\rk{\operatorname{rank}}
\def\ccp{{\sc Colorful \cara{}}}
\author{Fr\'ed\'eric Meunier}
\email{frederic.meunier@enpc.fr}
\author{Pauline Sarrabezolles}
\email{pauline.sarrabezolles@enpc.fr}
\title{Colorful linear programming, Nash equilibrium, and pivots}
\keywords{Bimatrix games; colorful \cara{} theorem; colorful linear programming; complexity; pivoting algorithms; Sperner's lemma}
\begin{document}

\maketitle 
 
 \begin{abstract}
The colorful \cara{} theorem, proved by \bara{} in 1982, states that given $d+1$ sets of points $\S_1,\ldots,\S_{d+1}$ in $\mathbb R^d$, with each $\S_i$ containing $\zero$ in its convex hull, there exists a set $T\subseteq\bigcup_{i=1}^{d+1}\S_i$ containing $\zero$ in its convex hull and such that $|T\cap\S_i|\leq 1$ for all $i\in\{1,\ldots,d+1\}$. An intriguing question -- still open -- is whether such a set $T$, whose existence is ensured, can be found in polynomial time. In 1997, \bara{} and Onn defined colorful linear programming as algorithmic questions related to the colorful \cara{} theorem. The question we just mentioned comes under colorful linear programming.

The traditional applications of colorful linear programming lie in discrete geometry. In this paper, we study its relations with other areas, such as game theory, operations research, and combinatorics. Regarding game theory, we prove that computing a Nash equilibrium in a bimatrix game is a colorful linear programming problem. We also formulate an optimization problem for colorful linear programming and show that as for usual linear programming, deciding and optimizing are computationally equivalent. We discuss then a colorful version of Dantzig's diet problem. We also propose a variant of the \bara{} algorithm, which is an algorithm computing a set $T$ whose existence is ensured by the colorful \cara{} theorem. Our algorithm makes a clear connection with the simplex algorithm and we discuss its computational efficiency. Related complexity and combinatorial results are also provided.
\end{abstract}

\section{Introduction}

\subsection{Context}
In 1982, \bara{} proved a colorful generalization of the \cara{} theorem, whose statement is the following.

\begin{theorem}[Colorful \cara{} theorem~\cite{Bar82}]
Given $d+1$ sets of points $\S_1,\ldots,\S_{d+1}$ in $\mathbb R^d$ such that each $\S_i$ contains $\zero$ in its convex hull, there exists a set $T$ of the form $\{s_1,\ldots,s_{d+1}\}$, with $s_i\in\S_i$ for every $i\in\{1,\ldots,d+1\}$, that contains $\zero$ in its convex hull.
\end{theorem}

A natural question raised by this theorem is whether such a {\em colorful} set $T$ can be computed in polynomial time. The case with $\S_1=\cdots=\S_{d+1}$, corresponding to the usual \cara{} theorem, is known to be solvable in polynomial time, via linear programming. However, the complexity of the colorful version remains an open question.

In 1997, \bara{} and Onn defined algorithmic and complexity problems related to the colorful \cara{} theorem~\cite{BO97}, giving birth to colorful linear programming. In their paper, the complexity question raised by the colorful \cara{} theorem is referred as an ``outstanding problem on the borderline of tractable and intractable problems''. In addition of providing a theoretical challenge, the colorful \cara{} theorem has several applications in discrete geometry (e.g. Tverberg partition, ``first selection lemma'', see~\cite{Mat02}). Any efficient algorithm computing such a colorful set $T$ would benefit these applications. 

A set of points is {\em positively dependent} if it is nonempty and contains $\zero$ in its convex hull. Given a configuration of $k$ sets of points $\S_1,\ldots,\S_k$ in $\R^d$, a set $T$ is {\em colorful} if it is of the form $\{s_1,\ldots,s_k\}$ with $s_i\in\S_i$ for every $i\in\{1,\ldots,k\}$. One of the problems studied in this paper is the following.\\

\noindent \ccp\\
{\bf Input.} A configuration of $d+1$ positively dependent sets of points $\S_1,\ldots,\S_{d+1}$ in $\Q^d$.\\
{\bf Task.} Find a positively dependent colorful set. \\

As we have already mentioned, the complexity status is still open. A more general problem has been recently proved to be PLS-complete by Mulzer and Stein~\cite{MuSt14}. The PLS class, where PLS stands for ``Polynomial Local Search'', contains the problems for which local optimality can be verified in polynomial time~\cite{JPY88}. The original proof of the colorful \cara{} theorem by \bara{} naturally provides an algorithm solving \ccp. This algorithm, known as the \bara{} algorithm, was analyzed and improved by \bara{} and Onn~\cite{BO97}. It is a pivot algorithm roughly relying on computing the closest facet of a simplex to $\zero$. Although not polynomial, this algorithm is quite efficient, as proved by Deza et al. through an extensive computational study~\cite{DHST08}. In addition to \ccp, \bara{} and Onn formulated the following problem, which is in a sense more general.\\

\noindent {\sc Colorful Linear Programming}\\
{\bf Input.} A configuration of $k$ sets of points $\S_1,\ldots,\S_k$ in $\Q^d$.\\
{\bf Task.} Decide whether there exists a positively dependent colorful set.\\

We emphasize that when we write ``{\sc Colorful Linear Programming}'', with small capital letters, we refer to that problem, as in \bara{}-Onn paper~\cite{BO97}, but when we write ``colorful linear programming'', we mean the study of the family of all problems about finding or deciding the existence of a \pdcs.

\bara{} and Onn showed that the case of {\sc Colorful Linear Programming} with $k=d$ is NP-complete even if each $\S_i$ is of size $2$, proving that the general case is NP-complete as well. It contrasts with \ccp. In this version, when each $\S_i$ is of size $2$, we clearly have a polynomial special case: select one point in each $\S_i$, find the linear dependency, and change for the other point in $\S_i$ for those having a negative coefficient.

A slightly more general version of \ccp{} can be defined with conic hulls instead of convex hulls.\\

\noindent\ccp{} (conic version)\\
{\bf Input.} A configuration of $d$ sets of points $\S_1,\ldots,\S_d$ in $\Q^d$ and an additional point $\boldsymbol{p}$ in $\bigcap_{i=1}^d\cone(S_i)\cap\Q^d$.\\
{\bf Task.} Find a colorful set $T$ such that $\boldsymbol{p}\in\cone(T)$.\\

The colorful set $T$ exists for sure because of a conic version of the colorful \cara{} theorem, also proved by \bara{} in the same paper~\cite{Bar82}. By an easy geometric argument, this problem coincides with \ccp{} when $\conv(\bigcup_{i=1}^d\S_i)$ does not contain $\zero$. Note that as usual for this kind of problems, there is a shift in the dimension when going from one version to the other.

 We define in the same way a conic version of {\sc Colorful Linear Programming}.\\

\noindent {\sc Colorful Linear Programming} (conic version)\\
{\bf Input.} A configuration of $k$ sets of points $\S_1,\ldots,\S_k$ in $\Q^{d}$ and an additional point $\boldsymbol{p}$ in $\Q^{d}$.\\
{\bf Task.} Decide whether there exists a colorful set $T$ such that $\boldsymbol{p}\in\cone(T)$.\\

\subsection{Main contributions}

\bigskip

\noindent{\bf ``{\sc Bimatrix} is a colorful linear programming problem''.} 

\smallskip

We prove that the problem {\sc Bimatrix}, consisting in computing a Nash equilibrium in a bimatrix game, is polynomially reducible to {\sc Finding Another Colorful Simplex Problem}, which is a colorful linear programming problem introduced by Meunier and Deza~\cite{MD12}. It shows that this latter problem is PPAD-complete. This was stated as an open question in the cited paper. On our way, we introduce a new method for proving that a linear complementarity problem belongs to the PPAD class, based on Sperner's lemma. This method seems to be interesting for its own sake, since it avoids the introduction of oriented primoids, as done usually. All these results are stated and proved in Section~\ref{sec:Nash}, where a definition of the PPAD class is also provided. \\


\noindent{\bf ``The simplex algorithm solves \ccp''.} 

\smallskip

In Section~\ref{sec:algo}, we show that the \bara{} algorithm and its improvement by \bara{} and Onn can be slightly modified in order to get what is more or less a ``Phase I'' simplex method. Instead of computing a closest facet at each pivot step, we select a new point by a classical reduced cost consideration. It simplifies the iteration, and improves the complexity of this latter. A complexity analysis of the overall algorithm is provided, as well as numerical experiments that show the effectiveness of the approach.\\  

\noindent{\bf ``Optimization and decision are equivalent for colorful linear programming''.} 

\smallskip

For the usual linear programming, it is known that the problem of deciding the existence of a solution to a linear program and the problem of optimizing a linear program are polynomially equivalent, see for instance Theorem 10.4 of Schrijver~\cite{Sch86}. In Section~\ref{sec:coloredOR}, we show that a similar equivalence holds for colorful linear programming. We present then an operations research type application of this result, namely a colored version of the famous diet problem by Dantzig.  We are not aware of similar applications in the literature, with industrial problems explicitly formulated as colorful linear programs.

\bigskip

In addition to these results, we show that {\sc Colorful Linear Programming} is NP-complete even if $k-d$ is fixed and each $\S_i$ is of size $2$. The proof is easy and this result should rather be called an observation. However, the complexity status of the case $k-d=1$ was a question of \bara{} and Onn (Section 5 of~\cite{BO97}) and it seems that people are not aware of the fact that the answer is easy. As a by-product of this complexity result, we get a new proof of the coNP-completeness of deciding whether a polytope is the projection of another polytope, both being described by systems of linear inequalities. These results are stated and proved in Section~\ref{sec:complexity}. 

We end the paper with a study of special cases and analogues of colorful linear programming in combinatorics (Section~\ref{sec:combin}). In particular, two combinatorial and polynomial cases of \ccp{} are presented. They are rather easy, but are not stated elsewhere and are clearly relevant for the understanding of the colorful linear programming landscape.

\section{Links with Nash equilibria}\label{sec:Nash}

\subsection{Another problem}\label{subsec:octahedral}

The fact that computing Nash equilibria is a colorful linear programming problem relies on the study of another problem similar to \ccp. This problem was proposed by Meunier and Deza~\cite{MD12} as a byproduct of an existence theorem, the Octahedron lemma~\cite{BM07,DHST06}, which by some features has a common flavor with the colorful \cara{} theorem. The Octahedron lemma states that if each $\S_i$ of the configuration is of size $2$ and if the points are in general position, the number of \pdcs s is even. By {\em general position}, we mean that no $d'+1$ points of the input points, $\zero$ included, lie in a same $(d'-1)$-dimensional affine subspace.

The problem we call {\sc Finding Another Colorful Simplex} is the following.\\

\noindent {\sc Finding Another Colorful Simplex}\\
{\bf Input.} A configuration of $d+1$ pairs of points $\S_1,\ldots,\S_{d+1}$ in $\Q^d$ and a positively dependent colorful set in this configuration.\\
{\bf Task.} Find another \pdcs.\\

 Another \pdcs{} exists for sure. Indeed, by a slight perturbation, we can assume that all points are in general position. If there were only one \pdcs{}, there would also be only one \pdcs{} in the perturbed configuration, which violates the evenness property stated by the Octahedron lemma. In their paper, Meunier and Deza question the complexity status of this problem. We solve the question by proving that it is actually a generalization of the problem of computing a Nash equilibrium in a bimatrix game. 

\subsection{{\sc Finding Another Colorful Simplex} is in \textnormal{PPAD}}

In~\cite{MD12}, it was noted that {\sc Finding Another Colorful Simplex} is in PPA. The class PPA, also defined by Papadimitriou in 1994~\cite{P94}, contains the class PPAD. PPA contains the problems that can be polynomially reduced to the problem of finding another degree $1$ vertex in a graph whose vertices all have degree at most $2$ and in which a degree $1$ vertex is already given. The graph is supposed to be implicitly described by the neighborhood function, which, given a vertex, returns its neighbors in polynomial time. The PPAD class is the subclass of PPA for which the implicit graph is oriented and such that each vertex has an outdegree at most $1$ and an indegree at most $1$. The problem becomes then: given an {\em unbalanced} vertex, that is a vertex $v$ such that $\deg^+(v)+\deg^-(v)=1$, find another unbalanced vertex.  See~\cite{P94} for further precisions.

We prove in this subsection that {\sc Finding Another Colorful Simplex} is in PPAD. We proceed by showing that the existence of the other \pdcs{} is a consequence of Sperner's lemma~\cite{Spe28}. 
Our method for proving that {\sc Finding Another Colorful Simplex} belongs to PPAD is adaptable for other complementarity problems, among them {\sc Bimatrix}. We believe that our method is new. It avoids the use, as in~\cite{P94,CDT09,kintali2009reducibility,VeVS12}, of {\em oriented primoids} or {\em oriented duoids} defined by Todd~\cite{Tod76}. 

One of the multiple versions of Sperner's lemma is the following theorem, proposed by Scarf~\cite{Sca67b}, which involves a triangulation of a sphere whose vertices are labeled. A simplex whose vertices get pairwise distinct labels is {\em fully-labeled}.

\begin{theorem}[Sperner's lemma]\label{thm:sperner}
Let $\mathsf{T}$ be a triangulation of an $n$-dimensional sphere $\mathcal{S}^n$ and let $V$ be its vertex set. Assume that the elements of $V$ are labeled according to a map $\lambda:V\rightarrow E$, where $E$ is some finite set. If $E$ is of cardinality $n+1$, then there is an even number of fully-labeled $n$-simplices.
\end{theorem}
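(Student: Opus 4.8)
The plan is to prove this by a parity (``door-to-door'') double-counting argument, with the topology of the sphere entering only through the fact that every codimension-one face of a triangulation of $\mathcal{S}^n$ lies in exactly two $n$-simplices. Without loss of generality I would fix $E=\{1,\ldots,n+1\}$. Call an $(n-1)$-simplex $\tau$ of $\mathsf{T}$ \emph{good} if the set of labels of its $n$ vertices is exactly $\{1,\ldots,n\}$, and let $P$ be the set of pairs $(\sigma,\tau)$ in which $\sigma$ is an $n$-simplex of $\mathsf{T}$ and $\tau$ is a good facet of $\sigma$. The whole proof consists in computing $|P|$ modulo $2$ in two ways.

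Counting over $\sigma$ first: for a fixed $n$-simplex $\sigma$ with vertex-label multiset $L$, the number of good facets of $\sigma$ is $0$ when $\{1,\ldots,n\}\not\subseteq L$; it is $2$ when $L$ consists precisely of the labels $\{1,\ldots,n\}$ with exactly one label repeated (delete either of the two vertices carrying the repeated label); and it is $1$ when $\sigma$ is fully-labeled (delete the unique vertex labeled $n+1$). Hence $|P|$ is congruent modulo $2$ to the number of fully-labeled $n$-simplices. Counting over $\tau$ next: every $(n-1)$-simplex of a triangulation of the closed manifold $\mathcal{S}^n$ is a facet of exactly two $n$-simplices, so each good $\tau$ contributes exactly $2$ to $|P|$, whence $|P|$ is even. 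Comparing the two counts yields that the number of fully-labeled $n$-simplices is even, which is the claim.

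The step I expect to be the main obstacle is the justification of the ``exactly two'' assertion in the second count. Depending on the precise notion of ``triangulation of a sphere'' one adopts, this is either taken as part of the definition or requires a short argument that a triangulated $n$-sphere is a closed combinatorial pseudomanifold — equivalently, that the link of an $(n-1)$-face is a $0$-sphere (two points), or, more elementarily, that a point in the relative interior of an $(n-1)$-face has a neighborhood in $\mathcal{S}^n$ meeting exactly two top-dimensional simplices because $\mathcal{S}^n$ has no boundary. Everything else is bookkeeping. I would also remark that the same scheme localizes: on a triangulated \emph{ball} the facet on the $(n-1)$-dimensional boundary sphere lies in only one $n$-simplex, and applying the argument there recovers the usual Sperner lemma and shows how the sphere version follows from gluing two balls, but for the statement as phrased the direct double count above is the cleanest route.
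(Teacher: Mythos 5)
Your double-counting argument is correct: the case analysis over an $n$-simplex (zero, one, or two ``good'' facets according to whether the label multiset omits some element of $\{1,\ldots,n\}$, is fully-labeled, or repeats one label of $\{1,\ldots,n\}$) is exhaustive, and the only topological input, that each $(n-1)$-simplex of a triangulated $n$-sphere lies in exactly two $n$-simplices, is exactly what a closed manifold (or pseudomanifold) guarantees; you flag and justify this correctly. Note, however, that the paper does not prove this statement at all: it is quoted as Scarf's sphere version of Sperner's lemma, and the proof the authors actually lean on later is the \emph{oriented path-following} proof cited from Scarf and Meunier. This difference matters for how the theorem is used. Your parity count is the most elementary route to the evenness statement, but it only pairs up fully-labeled simplices ``non-constructively modulo 2,'' which in complexity terms corresponds to membership in PPA (finding another odd-degree vertex in an unoriented graph). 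The proof of Proposition~\ref{prop:ppad} needs the stronger conclusion that the pairing comes from an \emph{oriented} path structure with polynomially computable successor and predecessor, which is what the path-following proof provides and the plain double count does not. So your argument proves the theorem as stated, and in fact your ``doors'' are precisely the edges of the path-following graph (a simplex is adjacent to its good facets), but to recover the paper's application you would additionally have to orient these paths, e.g.\ via the sign of the permutation matching labels to a local orientation of the sphere.
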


We state now the main proposition of this subsection.

\begin{proposition}\label{prop:ppad}
{\sc Finding Another Colorful Simplex} is in \textup{PPAD}.
\end{proposition}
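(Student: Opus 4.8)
The plan is to reduce, in polynomial time, {\sc Finding Another Colorful Simplex} to the problem of finding a second fully-labeled simplex in a Sperner instance on a $d$-dimensional sphere (Theorem~\ref{thm:sperner}), in such a way that the fully-labeled simplices are in polynomial-time computable bijection with the positively dependent colorful sets of the input configuration, with the prescribed solution corresponding to a prescribed fully-labeled simplex. Given such a reduction, the classical door-following proof of Theorem~\ref{thm:sperner} supplies the PPAD structure: one forms the graph whose vertices are the $d$-simplices of the triangulation, two of them being joined when they share a $(d-1)$-face carrying exactly the label set $\{1,\dots,d\}$; each $d$-simplex is incident to at most two such faces, so the graph has maximum degree two and its degree-one vertices are exactly the fully-labeled simplices. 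Choosing an orientation of the sphere and orienting each such face by comparing the orientations induced on it by its two incident $d$-simplices turns this into a directed graph in which every vertex has in-degree at most one and out-degree at most one, whose unbalanced vertices are the fully-labeled simplices --- precisely the implicit-graph setting of PPAD recalled above. This is the step where the references~\cite{P94,CDT09,kintali2009reducibility,VeVS12} invoke Todd's oriented primoids and duoids; we use instead nothing more than the orientation of a triangulated sphere. Starting from the unbalanced vertex attached to the prescribed solution and following the path, we reach a second unbalanced vertex and decode it into a second positively dependent colorful set; this is simultaneously a proof, through Sperner's lemma, of the ``second solution'' part of the Octahedron lemma.

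First I would carry out a general-position reduction. By a rational perturbation of the $2(d+1)$ input points whose required magnitude is bounded by Cramer's rule, hence computable in polynomial time exactly as in Remark~\ref{rk:gen}, one may assume that $\zero$ lies on no hyperplane spanned by $d$ of the points; then every positively dependent colorful set uses one point of each color and contains $\zero$ in its interior, and a solution of the perturbed instance is turned back into a solution of the original one in polynomial time. Fix notation so that the prescribed solution is $T_0=\{t_1,\dots,t_{d+1}\}$ with $t_i\in\S_i$, and write $t_i'$ for the other point of $\S_i$. Since $\zero\in\interior\conv(T_0)$, the cones $\cone(\{t_k:k\neq \ell\})$, for $\ell=1,\dots,d+1$, cover $\R^d$ and overlap only along the rays $\R_+t_k$; so to each $x$ lying in the interior of exactly one of these cones we may attach a color $\ell(x)$.

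The heart of the argument is the construction of the labeled sphere. Colorful sets are indexed by $\{+,-\}^{d+1}$, the facets of a $(d+1)$-dimensional cross-polytope whose boundary triangulates $\mathcal{S}^d$, each facet carrying a copy of the corresponding colorful $d$-simplex $T(\sigma)$ of $\R^d$. A static labeling of the $2(d+1)$ vertices of the cross-polytope by $d+1$ labels cannot read off positive dependence for an arbitrary configuration --- it breaks down at the vertices of $T_0$, which lie on the common boundary of the cones above --- so we refine the triangulation inside each colorful simplex by the central fan of $T_0$ (subdividing at $\zero$ those $T(\sigma)$ that contain it), obtaining a triangulated $\mathcal{S}^d$ in which every vertex lies in the interior of a single cone $\cone(\{t_k:k\neq\ell\})$ or is a forced new vertex, and we label it by the corresponding $\ell$ (the new vertices receiving a value dictated by the construction). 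The key lemma is then: a $d$-simplex of this refined triangulation is fully-labeled if and only if it lies inside a colorful simplex $T(\sigma)$ with $\zero$ in its interior; in particular $T_0$ contains exactly one fully-labeled simplex, which one exhibits explicitly and uses to start the path. It remains to observe that each $d$-simplex of the refinement has a polynomial-size encoding --- a sign vector of $\zero$ against the $\binom{d+1}{2}$ hyperplanes cutting out the central fan inside the relevant $T(\sigma)$ --- from which the label (a sign of a determinant in the homogenized points $(t_i,1)$ and $(t_i',1)$) and the door-neighbour are computable in polynomial time.

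The main obstacle is precisely this construction: exhibiting a triangulation of $\mathcal{S}^d$ that is combinatorially explicit enough to support a polynomial-time neighbour oracle yet geometrically faithful enough for ``fully-labeled'' to capture ``$\zero$ in the interior'', together with the verification that the prescribed colorful set seeds the path and is not a spurious configuration. Once the correspondence and its polynomial-time computability are in place, the remaining steps --- orienting $\mathcal{S}^d$, checking that the door graph has in- and out-degree at most one, identifying unbalanced vertices with fully-labeled simplices, and reading a second unbalanced vertex back as a second positively dependent colorful set --- are routine, and yield both membership in PPAD and a proof of the Octahedron lemma by way of Sperner's lemma.
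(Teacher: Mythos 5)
Your overall strategy---prove the existence of the second solution via Sperner's lemma (Theorem~\ref{thm:sperner}) on a triangulated $d$-sphere and extract PPAD membership from its oriented path-following proof---is exactly the strategy of the paper. But the step you yourself flag as ``the main obstacle'' is precisely the content of the proof, and your sketch of it does not go through as stated. The paper bypasses all of your geometric machinery with one abstract construction: it takes as vertex set the $2(d+1)$ input points themselves and defines $\mathsf{K}=\{\sigma\subseteq\bigcup_{i=1}^{d+1}\S_i:\ \bigcup_{i=1}^{d+1}\S_i\setminus\sigma\ \mbox{is positively dependent}\}$. Monotonicity of positive dependence makes this a simplicial complex, general position gives it dimension $d$, and via the Gale transform (Corollary 5.6.3(iii) of~\cite{Mat02}) it is a triangulation of $\mathcal{S}^d$; labeling each vertex by its color, the fully-labeled simplices are \emph{exactly} the complements of the positively dependent colorful sets, a bijection that is immediate and needs no refinement, no treatment of boundary cases, and no oracle beyond linear programming on $d+1$ points. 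That is the idea missing from your proposal: you never produce a triangulation with an explicit polynomial-time neighbour map and a clean one-to-one correspondence, you only describe a plan for one.

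Moreover, the plan has a concrete flaw. Your ``key lemma'' asserts that a simplex of the fan-refined cross-polytope is fully-labeled if and only if it lies inside a colorful simplex $T(\sigma)$ with $\zero\in\interior\conv(T(\sigma))$. Taken literally this makes \emph{every} refined simplex inside such a $T(\sigma)$ fully-labeled (there are at least $d+1$ of them coming from the central subdivision), which contradicts your own claim that $T_0$ contains exactly one, destroys the parity count, and---worse for PPAD membership---allows the path started at the seed to terminate at another fully-labeled simplex that decodes back to the \emph{same} colorful set $T$, i.e.\ to an invalid output. To repair this you would need to prove that each positively dependent colorful set contains exactly one fully-labeled simplex of the refinement, specify the labels of the vertices created on the cone boundaries and at $\zero$, and verify that the refinement is a genuine triangulation of $\mathcal{S}^d$ whose simplices admit polynomial-size encodings with a polynomial-time door/neighbour computation; none of this is done, and it is exactly the work the paper's complex $\mathsf{K}$ renders unnecessary. (Your orientation discussion replacing Todd's primoids is fine, but it is also what the paper already gets by citing the oriented proof of Sperner's lemma.)
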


The proof uses the following fact. It is an elementary fact about feasible bases in linear programming, but since we were not able to find any place where it is clearly stated and proved, we do it for sake of completeness. Let us consider the system of linear equalities and inequalities of the form 
\begin{equation}\label{S}\tag{S}
A\x=\b,\;\x\geq\zero,
\end{equation}
where $A$ is an $m\times n$ real matrix of rank $m$ and let $\b\in\R^m$. This system is {\em nondegenerate} if any feasible solution has a support of size at least $m$.

\begin{lemma}[``Complements of feasible bases form a sphere'']\label{lem:bases}
Let $P$ be the set of feasible solutions of \eqref{S}. Suppose that \eqref{S} is nondegenerate and that $P$ is bounded. Then $\{\{1,\ldots,n\}\setminus\supp(\x):\;\x\in P\}$ is a simplicial complex that is isomorphic to a triangulation of an $(n-m-1)$-dimensional sphere $\mathcal{S}^{n-m-1}$.
\end{lemma}

\begin{proof}
Denote by $\K$ the simplicial complex $\{\{1,\ldots,n\}\setminus\supp(\x):\;\x\in P\}$. It is a simplicial complex because the system is nondegenerate. The vertices of $\K$ are in one-to-one correspondence with the facets of $P$, which are in one-to-one correspondence with the vertices of the polar $P^*$ of $P$. The resulting one-to-one correspondence between the vertices of $\K$ and the vertices of $P^*$ leads to a simplicial map between $\K$ and the face complex of $\partial P^*$ (here we use the correspondence between the faces of $P$ and $P^*$ as stated in Proposition 5.3.5 of~\cite{Mat02}). The inverse of this map is also simplicial. Thus $\K$ is isomorphic to the face complex of $\partial P^*$.
\end{proof}

\begin{proof}[Proof of Proposition~\ref{prop:ppad}]
Define $A$ as the $(d+1)\times(2d+2)$ matrix, whose columns are the coordinates of the input points, with an additional $1$ at the $(d+1)$th position. This is done ``with repetitions'' and the columns are numbered so that the points in $\S_i$ correspond to columns $2i-1$ and $2i$ (thus, if a point is in several $\S_i$'s, it will provide several columns). Define $\b$ as the vector with $0$ at every position except at the $(d+1)$th position, where we also have a $1$. By a perturbation argument -- see~\cite{MeCh89} for instance for a description of such a polynomial-time computable perturbation -- we can assume that the system $$A\x=\b,\;\x\geq\zero$$ is nondegenerate. Every feasible basis of this system that contains exactly one element of each $\{2i-1,2i\}$ -- which we call a {\em colorful feasible basis} -- provides a \pdcs. We describe now a polynomial reduction of the problem of finding a colorful basis in this system to the problem of finding a fully-labeled simplex in a triangulation of a $d$-dimensional sphere.


We define a simplicial complex $\K$ with vertex set $\{1,\ldots,2d+2\}$ and whose simplices $\sigma$ are such that $\{1,\ldots,2d+2\}\setminus\sigma$ is the support of a feasible solution of the system. $\K$ is a simplicial complex of the form stated in Lemma~\ref{lem:bases}. It is thus a triangulation of $\mathcal{S}^{d}$, with $m=d+1$ and $n=2d+2$. Now, for $v$ a vertex of $\K$, define $\lambda(v)$ to be its color, i.e., the index $i$ such that $v\in\{2i-1,2i\}$. Any fully-labeled simplex $\sigma$ of $\K$ is such that $\{1,\ldots,2d+2\}\setminus\sigma$ is a colorful feasible basis and conversely. There is thus an explicit one-to-one correspondence between the fully-labeled simplices of $\K$ and the colorful feasible bases. 
The triangulation $\K$ here can easily be encoded by a Turing machine computing the neighbors of any simplex in the triangulation in polynomial time. There is a proof of Sperner's lemma (Theorem~\ref{thm:sperner}) via an oriented path-following argument~\cite{Sca67b,Meu13}, which considers directed paths joining fully-labeled simplices. Given a fully-labeled simplex, finding another fully-labeled simplex is thus in \textup{PPAD}, and so is {\sc Finding Another Colorful Simplex}.
\end{proof}


\subsection{Reduction of {\sc Bimatrix}}\label{subsec:bimatrix}  

A bimatrix game involves two $m\times n$ matrices with real coefficients $A=(a_{ij})$ and $B=(b_{ij})$. There are two players. The first player chooses a probability distribution on $\{1,\ldots,m\}$, the second a probability distribution on $\{1,\ldots,n\}$. Once these probability distributions have been chosen, a pair $(\bar{i},\bar{j})$ is drawn at random according to these distributions. The first player gets a payoff equal to $a_{(\bar{i},\bar{j})}$ and the second a payoff equal to $b_{(\bar{i},\bar{j})}$. A Nash equilibrium is a choice of distributions in such a way that if a player changes his distribution, he will not get in average a strictly better payoff.


Let $\Delta^k$ be the set of vectors $\boldsymbol{x}\in\mathbb{R}_+^k$ such that $\sum_{i=1}^kx_i=1$. Formally, a {\em Nash equilibrium} is a pair $(\boldsymbol{y}^*,\boldsymbol{z}^*)$ with $\boldsymbol{y}^*\in\Delta^m$ and $\boldsymbol{z}^*\in\Delta^n$ such that
\begin{equation}\label{eq:mixedNash}\boldsymbol{y}'^TA\boldsymbol{z}^*\leq \boldsymbol{y}^{*T}A\boldsymbol{z}^*\mbox{ for all $\boldsymbol{y}'\in\Delta^m$}
\quad\mbox{ and }\quad\boldsymbol{y}^{*T}B\boldsymbol{z}'\leq \boldsymbol{y}^{*T}B\boldsymbol{z}^*\mbox{ for all $\boldsymbol{z}'\in\Delta^n$}.
\end{equation} 
It is well-known that if the matrices have rational coefficients, there is a Nash equilibrium with rational coefficients, which are not too large with respect to the input. {\sc Bimatrix} is the following problem: given $A$ and $B$ with rational coefficients, find a Nash equilibrium. 
Papadimitriou showed in 1994 that {\sc Bimatrix} is in {\textup PPAD}~\cite{P94}. Later, Chen, Deng, and Teng~\cite{CDT09} proved its PPAD-completeness.

A combinatorial approach to these equilibria consists in studying the {\em complementary} solutions of the two systems
\begin{equation}\label{eq:A}
[A,I_m]\boldsymbol{x}=(1,\ldots,1)^T\mbox{ and }\boldsymbol{x}\in\mathbb{R}_+^{n+m}
\end{equation} and
\begin{equation}\label{eq:B}
[I_n,B^ T]\boldsymbol{x}=(1,\ldots,1)^T\mbox{ and }\boldsymbol{x}\in\mathbb{R}_+^{n+m}.
\end{equation} By {\em complementary solutions}, we mean a solution $\boldsymbol{x}_A$ of~\eqref{eq:A} and a solution $\boldsymbol{x}_B$ of~\eqref{eq:B} such that $\boldsymbol{x}_A\cdot\boldsymbol{x}_B=0$. Indeed, complementary solutions with $\supp(\boldsymbol{x}_A)\neq\{n+1,\ldots,n+m\}$ or $\supp(\boldsymbol{x}_B)\neq\{1,\ldots,n\}$ give a Nash equilibrium. This point of view goes back to Lemke and Howson~\cite{LeHo64}. A complete proof within this framework can be found in Remark 6.1 of~\cite{Meu13}.

We derive the difficulty of {\sc Finding Another Colorful Simplex} from the complexity of {\sc Bimatrix}.

\begin{proposition}\label{prop:another}
{\sc Finding Another Colorful Simplex} is \textup{PPAD}-complete.
\end{proposition}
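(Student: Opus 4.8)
The plan is to get membership for free from Proposition~\ref{prop:ppad} and to establish PPAD-hardness by a polynomial reduction of {\sc Bimatrix}, which is PPAD-complete~\cite{CDT09}. I would begin with a harmless normalization: replacing $A$ by $A+cJ$ and $B$ by $B+cJ$, where $J$ is the all-ones $m\times n$ matrix and $c$ a large rational, makes all entries of $A$ and $B$ positive without changing the set of Nash equilibria, since $\boldsymbol{y}^T(A+cJ)\boldsymbol{z}=\boldsymbol{y}^TA\boldsymbol{z}+c$ whenever $\boldsymbol{y}\in\Delta^m$ and $\boldsymbol{z}\in\Delta^n$. So I assume $A,B>0$ entrywise from now on.

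The reduction sends $(A,B)$ to the following instance, with $d=n+m$. For $k=1,\dots,n+m$ let $\S_k=\{a_k,b_k\}$, where $a_k\in\R^{n+m}$ is the $k$th column of $[A,I_m]$ followed by $n$ zeros, and $b_k\in\R^{n+m}$ is $m$ zeros followed by the $k$th column of $[I_n,B^T]$; then add one more pair $\S_{n+m+1}=\{w,w'\}$ with $w=-\mathbf{1}_{n+m}$ and $w'=(-\mathbf{1}_m,\mathbf{1}_n)\in\R^m\times\R^n$. This is a configuration of $d+1$ pairs of points of $\Q^d$, computable in polynomial time. As the distinguished input \pdcs{} I would take $T_0=\{b_1,\dots,b_n\}\cup\{a_{n+1},\dots,a_{n+m}\}\cup\{w\}$: it is colorful, and it is positively dependent because the sum of its $d+1$ points is $\zero$, so $\zero$ is the barycenter of $\conv(T_0)$.

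Everything then rests on a correspondence lemma, which I would prove in two parts. First, no \pdcs{} contains $w'$, and a colorful set avoiding both $w$ and $w'$ cannot be positively dependent, so every \pdcs{} contains $w$. Second, if $T=\{a_k:k\in S_A\}\cup\{b_k:k\in S_B\}\cup\{w\}$ is a \pdcs{}, with $S_A,S_B$ disjoint subsets of $\{1,\dots,n+m\}$, then in any representation of $\zero$ as a convex combination of the points of $T$ the coefficient $\theta_w$ of $w$ is positive, and dividing the coefficients of the $a_k$'s (resp.\ the $b_k$'s) by $\theta_w$ yields a solution $\boldsymbol{x}_A$ of~\eqref{eq:A} with $\supp(\boldsymbol{x}_A)\subseteq S_A$ and a solution $\boldsymbol{x}_B$ of~\eqref{eq:B} with $\supp(\boldsymbol{x}_B)\subseteq S_B$; these are complementary since $S_A\cap S_B=\emptyset$. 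Both parts use $A,B>0$: in a convex combination of points of a colorful set equal to $\zero$, reading the last $n$ coordinates forces the coefficient of $w'$ — or, when $w$ is present with coefficient $0$ (or $w,w'$ are both absent), the coefficients of all the $b_k$'s present — to vanish, and then reading the first $m$ coordinates forces the remaining coefficients to vanish too, contradicting that they sum to $1$. To conclude, I would observe that the \emph{artificial} complementary solution $\boldsymbol{x}_A=(\zero_n,\mathbf{1}_m)$, $\boldsymbol{x}_B=(\mathbf{1}_n,\zero_m)$ is the only one with $\supp(\boldsymbol{x}_A)=\{n+1,\dots,n+m\}$, and it can be produced only by $T_0$: if the pair extracted from a \pdcs{} with index partition $(S_A,S_B)$ is artificial, then $\{n+1,\dots,n+m\}=\supp(\boldsymbol{x}_A)\subseteq S_A$ and $\{1,\dots,n\}=\supp(\boldsymbol{x}_B)\subseteq S_B$, so disjointness forces $S_A=\{n+1,\dots,n+m\}$, $S_B=\{1,\dots,n\}$, i.e.\ the \pdcs{} is $T_0$. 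Hence any \pdcs{} distinct from $T_0$ yields complementary solutions with $\supp(\boldsymbol{x}_A)\neq\{n+1,\dots,n+m\}$, and normalizing the first $n$ coordinates of $\boldsymbol{x}_A$ and the last $m$ coordinates of $\boldsymbol{x}_B$ to probability vectors produces a Nash equilibrium of $(A,B)$ (Remark 6.1 of~\cite{Meu13}), in polynomial time. Together with Proposition~\ref{prop:ppad}, this gives PPAD-completeness. One also checks that the instance is a legitimate input — that a second \pdcs{} exists — by the Octahedron lemma argument of Section~\ref{subsec:octahedral} after perturbing into general position, or directly from the existence of a Nash equilibrium.

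The step I expect to be the real work is the correspondence lemma, and in particular the two design choices it hinges on: taking $w=-\mathbf{1}_{n+m}$, so that the right-hand sides of~\eqref{eq:A} and~\eqref{eq:B} are reproduced exactly rather than perturbed; and taking the partner point $w'=(-\mathbf{1}_m,\mathbf{1}_n)$, whose positive block — thanks to $A,B>0$ — makes it unusable by any \pdcs{}, which is precisely what prevents a spurious second solution equivalent to the artificial equilibrium. Once these are in place, the remaining verifications are routine linear algebra.
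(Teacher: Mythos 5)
Your proof is correct, and at its core it is the same reduction as the paper's: both encode {\sc Bimatrix} through the Lemke--Howson complementary systems \eqref{eq:A} and \eqref{eq:B}, pair the $k$th column of $[A,I_m]$ (padded with zeros) with the $k$th column of $[I_n,B^T]$ (padded with zeros) to get the color classes, use positivity of $A$ and $B$ to keep $\zero$ out of the relevant convex hull, take the artificial equilibrium (the identity columns) as the given colorful object, and invoke Proposition~\ref{prop:ppad} for membership. The genuine difference is how the conic-versus-convex gap is bridged. The paper introduces {\sc Finding Another Colorful Cone} ($m+n$ pairs in $\Q^{m+n}$, target $\boldsymbol{u}=(1,\ldots,1)$) and simply asserts its equivalence with {\sc Finding Another Colorful Simplex}; you instead land directly in the convex version by homogenizing the target into an extra pair $\S_{n+m+1}=\{w,w'\}$ with $w=-\mathbf{1}$ and a partner $w'=(-\mathbf{1}_m,\mathbf{1}_n)$ engineered so that no \pdcs{} can use it, and you prove the resulting correspondence lemma (every \pdcs{} contains $w$ with positive weight, scaling by that weight recovers complementary solutions, and the artificial pair forces $T=T_0$). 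What your route buys is that the conic-to-convex translation, which the paper leaves implicit, is made fully explicit, at the modest cost of one extra color class and the need to verify that $w'$ is unusable; what the paper's route buys is a shorter argument in dimension $m+n$ with $d=m+n-1$ and no dummy gadget, at the cost of relying on the unproved (though standard) equivalence of the two problem versions. Your identification of the choice of $w'$ as the critical design point is accurate: a careless partner point could create a spurious second \pdcs{} equivalent to the artificial solution, which is exactly what your positivity argument on the last $n$ and then first $m$ coordinates rules out.
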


\begin{proof}

We prove that the following version of {\sc Finding Another Colorful Simplex} with cones is PPAD-complete. This version is equivalent to {\sc Finding Another Colorful Simplex}.\\

\noindent {\sc Finding Another Colorful Cone}\\
{\bf Input.} A configuration of $d+1$ pairs of points $\S_1,\ldots,\S_{d+1}$ in $\Q^{d+1}$, an additional point $\boldsymbol{p}$ in $\Q^{d+1}$ such that $\conv(\bigcup_{i=1}^{d+1}\S_i)$ does not contain $\zero$, and a colorful set $T$ such that $\boldsymbol{p}\in\cone(T)$.\\
{\bf Task.} Find another colorful set $T'$ such that  $\boldsymbol{p}\in\cone(T')$.\\

The proof uses a reduction of {\sc Bimatrix} to {\sc Finding Another Colorful Cone}. Consider an instance of {\sc Bimatrix}. First note that we can assume that all coefficients of $A$ and $B$ are positive. Indeed, adding a same constant to all entries of the matrices does not change the game. Build the $(m+n)\times(2(m+n))$ matrix
$$M=\left(\begin{array}{cccc}A & I_m & \zero & \zero \\ \zero & \zero & I_n & B^T\end{array}\right).$$ We denote by $M_i$ the $i$th column of $M$. Note that the vector $\boldsymbol{u}=(1,\ldots,1)\in\mathbb{R}^{n+m}$ is in the conic hull of $T=\{M_{n+1},\ldots,M_{n+m},M_{n+m+1},\ldots,M_{2n+m}\}$.
Indeed, the corresponding submatrix is the identity matrix.

Let $\S_i$ be the pair $\{M_i,M_{m+n+i}\}$ for $i=1,\ldots,m+n$. Since all coefficients of $A$ and $B$ are positive, $\zero$ is not in the convex hull of the columns of $M$ and $\boldsymbol{u}$. A polynomial time algorithm solving {\sc Finding Another Colorful Simplex} with $T$ as input set would find another colorful set $T'$ such that $\boldsymbol{u}\in\cone(T')$. The decomposition of $\boldsymbol{u}$ on the points in $T'$ gives a nonnegative vector $\boldsymbol{x}$ such that $M\boldsymbol{x}=\boldsymbol{u}$, $x_ix_{m+n+i}=0$ for $i=1,\ldots,m+n$, and $\supp(\boldsymbol{x})\neq\{n+1,\ldots,2n+m\}$. Such an $\boldsymbol{x}$ can be written $(\boldsymbol{x}_A,\boldsymbol{x}_B)$ with $\boldsymbol{x}_A,\boldsymbol{x}_B\in\mathbb{R}_+^{m+n}$ satisfying $\boldsymbol{x}_A\cdot\boldsymbol{x}_B=0$ and such that $\supp(\boldsymbol{x}_A)\neq\{n+1,\ldots,n+m\}$ or $\supp(\boldsymbol{x}_B)\neq\{1,\ldots,n\}$. In other words, it would find a Nash equilibrium. {\sc Bimatrix} being PPAD-complete, Proposition~\ref{prop:ppad} implies therefore that {\sc Finding Another Colorful Simplex} is PPAD-complete.
\end{proof}

This proof shows that {\sc Finding Another Colorful Simplex} is more general than computing complementary solutions of Equations~\eqref{eq:A} and~\eqref{eq:B}. In~\cite{MD12}, a pivoting algorithm for solving {\sc Finding Another Colorful Simplex} is proposed. It reduces to the classical pivoting algorithm due to Lemke and Howson~\cite{LeHo64} used for computing such complementary solutions.


\begin{remark}
A reviewer noted that Propositions~\ref{prop:ppad} and~\ref{prop:another} are close to results presented in a paper by Kir\'aly and Pap~\cite{KiPa13}. Actually, using results of the latter paper, we can directly prove Proposition~\ref{prop:ppad}. As suggested by the referee, consider the set of all $(2d+2)$-dimensional vectors corresponding to coefficients of positively dependences. It is a $(d+1)$-dimensional polytope $Q$ with exactly $(2d+2)$ facets, each facet corresponding to a point from the input with a $0$ coefficient in the positive dependence. Color each facet with the color of the corresponding input point. The \pdcs{}s correspond to vertices of $Q$ surrounded by facets of distinct colors (``panchromatic vertices''). The Theorem 2 of their paper ensures that if there is a panchromatic vertex, there is another one, thus showing the Octahedron lemma, and the related computational problem is implicitly proved to be PPAD. We get thus Proposition~\ref{prop:ppad} by this way. We did not find a direct and easy way for proving Proposition~\ref{prop:another} with the results of the mentioned paper, but it should be possible.

However, we think that the proof we propose for Proposition~\ref{prop:ppad} remains interesting, especially because it describes a systematic way for proving PPAD-membership of linear complementary problems. Moreover, it reveals that an elementary Sperner-type computational problem is PPAD-complete, see next remark.
\end{remark}

\begin{remark}[Complexity of Sperner's lemma]\label{rk:sperner}

Consider the problem\\

\noindent{\bf Input.} A triangulation $\mathsf{T}$ of $\mathcal{S}^d$ involving $2(d+1)$ vertices, a labeling $\lambda:V(\mathsf{T})\rightarrow\{1,\ldots,d+1\}$, and a fully-labeled simplex.\\
\noindent{\bf Task.} Find another fully-labeled simplex. \\

The reduction used in the proof of Proposition~\ref{prop:ppad} and Proposition~\ref{prop:another} shows that this problem is actually PPAD-complete, even if each label appears exactly twice. Sperner-type problems have already been proved to be PPAD-complete~\cite{CD06,P94}, but these latter problems are in fixed dimension, with an exponential number of vertices, and with a labeling given by an oracle, while the Sperner-type problem we introduce has an explicit description of the vertices and of the labeling. Note that the number of vertices is small. As already mentioned, the paper by Kir\'aly and Pap~\cite{KiPa13} present and discuss computational problems that are in the same spirit and also proved to be PPAD-complete, but they are presented in a form that makes them farther from the classical Sperner's lemma: their formulation involves an unbounded polyhedron, with a coloring of the facets, and some conditions on the extreme rays. 

Remark~\ref{rk:sperner_comp} in Section~\ref{sec:combin} will exhibit some polynomial cases of the {\textsc Sperner}-type problem we introduce here.
\end{remark}

\section{Simplexification of \bara{}'s algorithm}\label{sec:algo}

The proof of the colorful \cara{} theorem by \bara{} \cite{Bar82} actually provides an algorithm. This algorithm was improved and analyzed by \bara{} and Onn in 1997. In this section, we propose a new algorithm to compute a colorful solution. It can be seen as a ``simplexification'' of \bara{}'s algorithm. This latter requires at each iteration to compute the projection of the origin on a simplex. We show that this projection operation, which is not an easy task, although polynomial, can be replaced by a simple reduced cost test. This leads to an algorithm that is almost the classical simplex algorithm.

\subsection{\bara's algorithm}

The pivoting algorithm proposed by \bara{} for solving \ccp{} goes as follows. The input is the sets $\S_1,\ldots,\S_{d+1}$ of points in $\Q^d$, each of cardinality $d+1$ and positively dependent. \\

\noindent{\bf \bara's algorithm}
\begin{itemize}
\item Choose a first colorful set $T_1$ of size $d+1$, let $i:=0$, and let $\x_1$ be the point of minimum norm in $\conv(T_1)$.
\item Repeat: 
\begin{itemize}
\item Let $i:=i+1$.
\item If $\zero\in\conv(T_i)$, stop and output $T_i$.
\item Otherwise, find $\x_i$ of minimum norm in $\conv(T_i)$; choose $\t_i\in T_i$ such that $\x_i\in\conv(T_i\setminus\{\t_i\})$; choose a point $\t$ in $\S_k$ minimizing $\t\cdot\x_i$, where $k$ is the color of $\t_i$; define $T_{i+1}:=T_i\setminus\{\t_i\}\cup\{\t\}$.
\end{itemize}
\end{itemize} \bigskip


It is rather straightforward to prove that the sequence $(\|\x_i\|)$ converges to $0$ when $i$ goes to infinity, see~\cite{BO97} for the details. By finiteness, it implies that \bara{}'s algorithm finds a \pdcs{} in finite time. The technical step is the computation of $\x_i$, which is a projection computation. It is a rather heavy task, which can moreover only be solved approximately. The improvement of \bara{} and Onn consists in replacing this projection by the computation of another point that plays a similar role, but which requires only linear algebra for its computation. Deza et al.~\cite{DHST08} shows how to compute this point in $O(d^3)$. They also proceed to an extensive experimental study of algorithms solving this problem. In addition to some heuristics, ``multi-update'' versions are also proposed, but they do not avoid this kind of operations.

Our algorithm, described in the next section, allows to do each iteration in $O(d^2)$. Perhaps more interestingly, it shows that in some sense the simplex algorithm solves \ccp{}, with almost all classical pivot-rules as possible variations.

\subsection{The simplex version}

\subsubsection{The algorithm}

We add a dummy point $\boldsymbol{v}$ and define the following optimization problem. 
\begin{equation}\tag{P}\label{prog}\begin{array}{rl}\min & z \\
\mbox{s.t.} & \left(\begin{array}{ccc}\multicolumn{3}{c}M \\ 1 & \ldots & 1\end{array}\right)\boldsymbol{\lambda}+z\bar{\boldsymbol{v}}=\left(\begin{array}{c}0 \\ \vdots \\ 0 \\ 1\end{array}\right)\\
& \boldsymbol{\lambda}\geq\zero,\,z\geq 0,
\end{array}
\end{equation}
where $\bar{\boldsymbol{v}}=(\boldsymbol{v},1)$ and where $M$ is the $d\times(d+1)^2$ matrix whose columns are the points of $\bigcup_{i=1}^{d+1}\S_i$ (where as in the proof of Proposition~\ref{prop:ppad} we consider the disjoint union of the points). This optimization problem simply looks for an expression of $\zero$ as 
a convex combination of the points in $\{\boldsymbol{v}\}\cup\bigcup_{i=1}^{d+1}\S_i$ with a minimal weight on $\boldsymbol{v}$. Especially, if $\zero\in\conv(\bigcup_i\S_i)$, the optimal value is $0$. The idea consists in seeking an optimal basis, with the terminology of linear programming, which in addition is required to be colorful. The colorful \cara{} theorem ensures that such a basis exists.

Now, choose a first {\em transversal} $F_1$, which is a colorful set of cardinality $d$. Choose the dummy point $\boldsymbol{v}$ so that $F_1\cup\{\boldsymbol{v}\}$ contains $\zero$ in the interior of its convex hull (we explain below in Section~\ref{subsubsec:pre} how it is possible to assume the existence of such set and point, and how to compute them). Note that $F_1\cup\{\boldsymbol{v}\}$ is a feasible basis. The algorithm proceeds with simplex pivots, going from feasible colorful basis to feasible colorful basis, until an optimal colorful basis is found. We start with $i:=0$. We repeat then 

\begin{itemize}\item[\null]\begin{itemize}
\item Let $i:=i+1$.
\item Choose a point $\boldsymbol{t}$ of the missing color in $F_i$ with negative reduced cost. The reduced costs are computed according to the current basis $F_i\cup\{\boldsymbol{v}\}$.
\item Proceed to a simplex pivot operation with $\boldsymbol{t}$ entering the current basis. 
\item If $\boldsymbol{v}$ leaves the basis, stop and output $F_i\cup\{\boldsymbol{t}\}$ (it is an optimal colorful basis).
\item Otherwise, define $F_{i+1}$ to be the new basis minus $\boldsymbol{v}$.
\end{itemize} \end{itemize}\bigskip

The following lemma shows that as long as a \pdcs{} has not been found, selecting a point of a color missed by $F_i$ with a negative reduced cost is the same as selecting a point in the open half-space delimited by $\aff(F_i)$ and containing $\zero$. As in \bara's algorithm, we get closer and closer to the origin, and by finiteness, we eventually find a \pdcs.

\begin{lemma}\label{lem:pivot} The points in the open half-space delimited by $\aff(F_i)$ and containing $\zero$ are precisely the points with a negative reduced cost.
\end{lemma}

\begin{proof}
Let $F_i=\{\boldsymbol{u}_1,\ldots,\boldsymbol{u}_d\}$ and let $\boldsymbol{t}$ be any other point in $\left(\bigcup_{j=1}^{d+1}\S_j\right)\setminus F_i$. Consider $x_1,\ldots,x_d,r,s\in\mathbb{R}$ such that 
\begin{equation}\label{eq:pivot}\boldsymbol{t}+s\boldsymbol{v}+\sum_{i=1}^dx_i\boldsymbol{u}_i=\zero\qquad\mbox{and}\qquad 1+s+\sum_{i=1}^dx_i=0.\end{equation} The reduced cost of $\boldsymbol{t}$ is exactly $s$. Therefore, proving the lemma amounts to prove that $s$ is negative exactly when $\boldsymbol{t}$ is in the open half-space delimited by $\aff(F_i)$ and containing $\zero$. To see this, note that \eqref{eq:pivot} implies \begin{equation}\label{eq:pivot_bis}(\boldsymbol{t}-\boldsymbol{u}_1)+s(\boldsymbol{v}-\boldsymbol{u}_1)+\sum_{i=2}^dx_i(\boldsymbol{u}_i-\boldsymbol{u}_1)=\zero.\end{equation} Now, choose a unit vector $\boldsymbol{n}$ orthogonal to $\aff(F_i)$. Take the scalar product of Equation~\eqref{eq:pivot_bis} and $\boldsymbol{n}$. It gives $$r\boldsymbol{n}\cdot(\boldsymbol{t}-\boldsymbol{u}_1)+s\boldsymbol{n}\cdot(\boldsymbol{v}-\boldsymbol{u}_1)=0$$ and the conclusion follows since $\boldsymbol{v}$ and $\zero$ are in the same half-space delimited by $\aff(F_i)$.
\end{proof}

This approach is reminiscent of the ``Phase I'' simplex method, which computes a first feasible basis by solving an auxiliary linear program whose optimal value is $0$ on such a basis.

\subsubsection{Preprocessing}\label{subsubsec:pre}

\noindent {\bf Computation of $F_1$.}
The algorithm requires to have a first transversal $F_1$ whose elements are linearly independent. If there is no such transversal, it means by Rado's theorem (\cite{Ra42}, see also p.702 of~\cite{Sch03}) that there exists $I\subseteq[d]$ such that $\rk(\bigcup_{i\in I}\S_i)<|I|$. Deciding the existence of the transversal and computing $I$ if it does not exist can be done in polynomial time with the help of Edmonds' matroid intersection algorithm. Now, we remove all colors not in $I$ and consider \ccp{} in the linear vector space spanned by $\bigcup_{i\in I}\S_i$. We can thus repeat this process several times if necessary  and we will eventually get an input with a transversal whose elements are linearly independent. \\

\noindent{\bf Computation of $\vv$.}  It is easy to choose $\vv$. A way to do it that makes the complexity computation required for Proposition~\ref{prop:complex_simplex} below easy is to define $\vv$ as $-\sum_{i=1}^d\u_i$, where the $\u_i$'s are the vertices of $F_1$.

\subsubsection{Complexity analysis}

As for the classical simplex algorithm, it is not easy to provide a complexity analysis. We can however prove the following result.

\begin{proposition}\label{prop:complex_simplex}
Suppose that the $\S_i$'s and $\zero$ are in general position. Suppose moreover that all points have integer coordinates. Then the algorithm computes a colorful set $T$ in at most $(d^8+1)\Delta^4$ iterations, where $\Delta$ is the largest absolute value of any $d\times d$ subdeterminant of the input points (i.e., of $M$).
\end{proposition}

\begin{proof}
Let us write the program~\eqref{prog} under the form 
$$\begin{array}{lll}\min & \c\cdot\x \\
\mbox{s.c.} & A\x=\b \\
& \x\geq\zero,
\end{array}$$
where $$\b=\c=\left(\begin{array}{c}0 \\ \vdots \\ 0 \\1\end{array}\right)\qquad\mbox{ and }\qquad A=\left(\begin{array}{cccc}\multicolumn{3}{c}M & -\sum_{i=1}^d\u_i\\ 1 & \ldots & 1 & 1\end{array}\right).$$
Let $\delta_P$ and $\gamma_P$ be respectively the minimum and maximum values of all positive elements of basic feasible solutions of the program~\eqref{prog}. For a feasible basis $B$, denote by $\s^B$ the corresponding reduced costs. Define then $$\delta'_D=\min\{-s_j^B:\;B\mbox{ is a feasible basis and }s_j^B<0\}$$ and 
$$\gamma'_D=\max\{-s_j^B:\;B\mbox{ is a feasible basis and }s_j^B<0\}.$$ A theorem by Kitahara and Minzuno~\cite{KiMi13} states that the number of different basic feasible solutions generated by the algorithm, and thus the number of pivot steps when the input is nondegenerate, is at most $$\left\lceil d\frac{\gamma_P\gamma'_D}{\delta_P\delta'_D}\right\rceil.$$ In order to apply this theorem, we bound $\delta_P$, $\gamma_P$, $\delta_D$, and $\gamma_D$.

For a feasible basis $B$, the corresponding feasible basic solution is given by $$\x_B=A_B^{-1}\b\qquad\mbox{and}\qquad\x_N=\zero.$$ We have $A_B^{-1}=\frac 1 {\det A_B}(\com A_B)^T$. Note that the last column of $A$ is always in any feasible basis $B$ generated by the algorithm, otherwise we would be at the optimal solution of \eqref{prog}. We compute $\det(A_B)$ by expanding along the last row. Since the last column of $A$ is the opposite of the sum of $d$ other columns, we get that $\det(A_B)\leq(d^2+1)\Delta$. Using that the input points have integer coordinates, we get $$\delta_P\geq\frac 1 {(d^2+1)\Delta}.$$ 
The input points having integer coordinates we have $\det A_B\geq 1$. Using Cramer's formula, we get $$\gamma_P\leq d\Delta.$$
We have $$\s_B^B=\zero\qquad\mbox{and}\qquad\s_N^B=\c_N-A_N^T(A_B^T)^{-1}\c_B.$$ Here, $\c_N=\zero$ and $\c_B$ is the $d$th standard unit vector, except for the final basis computed by the algorithm. Using exactly the same approach as for $\delta_P$ and $\gamma_p$ to bound the entries of $A_N^T(A_B^T)^{-1}$, we get that 
$$\delta'_D\geq \frac 1 {(d^2+1)\Delta}\qquad\mbox{ and }\qquad\gamma'_D\leq(d^2+1)\Delta.$$
\end{proof}

We are not able to propose a convincing way to compare the theoretical complexity of \bara-Onn's algorithm with ours. However, our algorithm works also for the conic version of \ccp, and in this case, our analysis provides the following result. The first transversal can be assumed to exist using a similar approach as in Section~\ref{subsubsec:pre}: by applying Rado's theorem, we can assume that there exists a linearly independent colorful set; the first transversal is given by any $(d-1)$-subset $F_1$ of this independent colorful set whose linear span does not contain $\p$. 

\begin{proposition}
Consider \ccp{} in its conic version. Suppose that the matrix $M$ whose columns are the $d^2$ input points is totally unimodular and that $\p$ is in $\Z^d$. If the system $$M\x=\p,\;\x\geq\zero$$ is nondegenerate, then the algorithm computes a colorful cone containing $\p$ in at most $d^6(\|\p\|_{\infty}+1)^4$ pivot steps.
\end{proposition}

\begin{proof}
We define $\boldsymbol{v}$ to be $\p-\sum_{i=1}^{d-1}\u_i$, where the $\u_i$'s form the subset $F_1$. We have then 
$$\delta_P\geq  \frac 1 {d(\|\p\|_{\infty}+1)}\qquad\mbox{and}\qquad\gamma_P\leq d^2\|\p\|_{\infty},$$
and 
$$\delta'_D\geq \frac 1 {d(\|\p\|_{\infty}+1)}\qquad\mbox{ and }\qquad\gamma'_D\leq d(\|\p\|_{\infty}+1).$$

\end{proof}

\subsubsection{Numerical results}

We implemented our algorithm in C++. The tests are performed on a PC Intel{\small\textsuperscript{\textregistered}} Core{\small\textsuperscript{\texttrademark} i3-2310M}, with two 64-bit CPUs, clocked at 2.1 GHz, with 4 GB RAM.
The instances are provided by five random generators, implemented by Huang in MATLAB and described in his master thesis~\cite{Hu07}. All the generators provide instances of $(d+1)^2$ points in general position on the unit sphere, partitioned into $d+1$ colors and such that the origin $\zero$ is in the convex hull of each color. Descriptions of the generators can be found in~\cite{Hua05}. At each iteration, we choose the entering point $\boldsymbol{t}$ that has the most negative reduced cost.

Table~\ref{tab:results} presents the computational results on 50 instances by dimension and by generators (the ``tube'' instances are those referred as ``unbalanced'' in Huang's master thesis). The columns ``time'' give the average execution time of the algorithm in milliseconds. The columns ``\# pivots'' give the average number of pivots. The entry corresponding to the ``tube'' case in dimension $384$ is empty, since we faced cycling behavior for some instances (we felt that adding anti-cycling pivot rules was not imperative for our experiments).

\begin{table}[!h]
\begin{center}
\begin{tiny}\begin{tabular}{r|r|r|r|r|r|r|r|r|r|r}
 & \multicolumn{2}{c|}{Random}  & \multicolumn{2}{c|}{Tube} & \multicolumn{2}{c|}{Highdensity}  & \multicolumn{2}{c|}{Lowdensity}  & \multicolumn{2}{c}{Middensity} \\
Dimension&time&\# pivots&time&\# pivots&time&\# pivots&time&\# pivots&time&\# pivots\\ \hline

   3 &0.0135 &1.94& 0.0123 &2.02& 0.0342&1.62&0.0138&2.32 & 0.0170&1.70\\

   6 &0.0180&3.38& 0.0195& 3.42& 0.0474&1.98&0.0213&6.50& 0.0164&2.88\\

   12 &0.0406 &6.56&0.0396 &7.68& 0.0609&1.84&0.0591&19.00& 0.0371&4.88\\

   24 & 0.1433&13.76& 0.1574&19.66& 0.0871&1.94&0.2958&51.06&0.1123&9.62\\

   48 & 0.9612 &31.86& 1.1684&43.88& 1.1006&1.94&2.7946&133.70& 0.7725&19.44\\

   96 & 8.5069&76.42& 11.2441&108.10 &3.1116 &1.92&28.5813 &349.44&6.1306&39.46\\

   192 & 81.0170 &186.62& 250.1050&284.96&21.6753 &1.86&263.1400&831.26&50.2998&93.88\\ 

   384 &1111.5020&476.50&	&	&441.1310&2.00	&5987.8880&2032.60&846.9148&279.12 \\  \multicolumn{1}{c}{}\\ \multicolumn{1}{c}{}
\end{tabular}

\end{tiny}
\caption{
 \label{tab:results}Average solution time and number of pivots for the simplex-like algorithm}
\end{center}
\end{table}

We compare these results with those of the \bara{}-Onn algorithm presented in the paper by Deza et al.~\cite{DHST08} using the same generators. We provide the detailed numerical results of that paper in Table~\ref{tab:huang}, taken from Huang's master thesis (in which the algorithm is referred to as BO2).

\begin{table}[!h]
\begin{center}

\begin{tiny}\begin{tabular}{r|r|r|r|r|r|r|r|r|r|r}
 & \multicolumn{2}{c|}{Random}  & \multicolumn{2}{c|}{Tube} & \multicolumn{2}{c|}{Highdensity}  & \multicolumn{2}{c|}{Lowdensity}  & \multicolumn{2}{c}{Middensity} \\
Dimension & time &\# pivots&time&\# pivots&time&\# pivots&time&\# pivots&time&\# pivots\\ \hline

   3   & 0.7501& 2.18  & 1.083\hspace{1.15mm} & 3.70  & 0.5735 &1.53 & 0.9090 & 2.88 & 0.7613 & 2.24\\

   6   & 1.832\hspace{1.15mm} & 4.94  & 3.716\hspace{1.15mm} & 11.46   & 0.8450 &1.73 & 2.404\hspace{1.15mm} & 6.81 & 1.936\hspace{1.15mm} & 5.26\\

   12  & 5.789\hspace{1.15mm} & 11.15 & 13.72\hspace{2.3mm} & 29.10   & 1.457\hspace{1.15mm}  & 1.61 & 7.225\hspace{1.15mm} & 14.26 & 5.984\hspace{1.15mm}  & 11.51\\

   24  & 21.88\hspace{2.3mm} & 24.94 & 56.23\hspace{2.3mm}  & 69.98  & 3.685\hspace{1.15mm} & 1.65 & 24.91\hspace{2.3mm} & 28.53 & 21.38\hspace{2.3mm} & 23.76\\

   48  & 96.67\hspace{2.3mm} & 52.32 & 251.7\hspace{3.45mm}   & 150.2\hspace{1.15mm}  & 15.89\hspace{2.3mm} & 1.48 & 105.8\hspace{3.45mm} & 59.9\hspace{1.15mm}  & 92.40\hspace{2.3mm} & 47.9\hspace{1.15mm}  \\

   96  & 513.1\hspace{3.45mm} & 107.9\  & 1278\hspace{5.6mm} & 298.9\hspace{1.15mm}   & 82.79\hspace{2.3mm} & 1.58 & 542.5\hspace{3.45mm}  & 111.0\hspace{1.15mm}  & 488.1\hspace{3.45mm} & 95.92\\

   192 & 3465\hspace{5.6mm} & 214.7\  & 9253\hspace{5.6mm} & 625.8\hspace{1.15mm} & 559.7\hspace{3.45mm} &1.8\hspace{1.15mm} & 3711\hspace{5.6mm}  &221.6\hspace{1.15mm} & 3274\hspace{5.6mm}  & 191.9\hspace{1.15mm}  \\ 

   384 & 34230\hspace{5.6mm} & 433.6\hspace{1.15mm} &	             &	             & 4355\hspace{5.6mm}  & 1.64 & 37010\hspace{5.6mm}  &459.8\hspace{1.15mm} & 30150\hspace{5.6mm} & 384\hspace{3.3mm}  \\  \multicolumn{1}{c}{}\\ \multicolumn{1}{c}{}
\end{tabular}

\end{tiny}
\caption{
 \label{tab:huang}Average solution time and number of pivots for the \bara{}-Onn algorithm, as recorded in~\cite{Hu07}}
\end{center}
\end{table}

In general, our number of pivots is slightly larger than what they get. Regarding the computation time, it is hard to draw a conclusion since their implementation was done in MATLAB and since they used a different machine (a server with eight 64-bit CPUs, clocked at 2.6 GHz, with 64 GB RAM). However, we are always much faster -- our total computation time is of the order of ten times smaller than theirs -- and our time per iteration is up to thirty times smaller.


\section{Optimization in colorful linear programming and applications}\label{sec:coloredOR}

\subsection{Optimization}

The colorful linear programming problem is defined as a feasibility problem. However, it is natural to introduce an optimization version. For the usual linear programming problem, feasibility and optimization are equivalent: if we have a method for deciding whether a linear program is feasible, then we can use it for solving linear programs to optimality. The same holds for colorful linear programming, as we are now going to prove.

A {\em colorful linear program} is a mathematical program of the form 
\begin{equation}\tag{Q}\label{P}
\begin{array}{lll}
\min &\c\cdot\x\\
\mbox{s.t.}& A\x=\b\\
&\x\geq \zero\\
&|\supp(\x)\cap I_i|\leq 1,\qquad\forall i\in\{1,\ldots,k\}
\end{array}
\end{equation}
where $A$ is a $d\times n$ real matrix, $\b$ an element of $\R^d$, and $\c$ an element of $\R^n$,
 and where the $I_i$'s form a partition of the columns of $A$, the number of colors being $k$. The problem of deciding whether a colorful linear program has a solution is exactly {\sc Colorful Linear Programming} in its conic version.

\begin{proposition}\label{prop:feas_opt}
 The two problems of deciding the feasibility of a colorful linear program and of finding the optimal solution of a colorful linear program are polynomially reducible to each other.
\end{proposition}

\begin{proof}
Clearly, the feasibility problem is reducible to the optimization problem. We show now the converse implication and let us assume that we can decide in polynomial time whether a colorful linear program has a feasible solution. We explain then how to compute an optimal solution of~\eqref{P}.

We start by testing whether the problem~\eqref{P} has a feasible solution. If not, we are done. Otherwise, we compute a feasible solution by repeating several times the decision test: we apply it with the first color reduced to each point in turn, until the answer is `yes', keep this point, and proceed similarly for all colors one after the other. It provides an upper bound $\mu_{\sup}$ on the optimal value of~\eqref{P}. We compute also a lower bound $\mu_{\inf}$ by taking the noncolorful version of~\eqref{P}, i.e., by relaxing the constraint $$|\supp(\x)\cap I_i|\leq 1\qquad\forall i.$$ This lower bound is polynomially computable since it is usual linear programming.

Consider now the feasibility problem of deciding whether the following system has a solution. Using our polynomial decision test, we can solve it in polynomial time.
\begin{equation}\label{eq:sys_feas}
\left\{\begin{array}{ll}
\begin{pmatrix}  A&0\\ \c&1 \end{pmatrix}\x' = \begin{pmatrix} \b\\\mu\end{pmatrix}&\\ \\
\x'\geq \zero& \\ \\
|\supp(\x')\cap I_i|\leq 1 & \quad\forall i\\ \\
|\supp(\x')\cap \{n+1\}| \leq  1.&
\end{array}\right.
\end{equation}
The colors for this new colorful linear program are the $I_i$'s defined for the original program and an additional $(k+1)$th color $I_{k+1}=\{n+1\}$. The problem~\eqref{eq:sys_feas} is feasible if and only if a colorful solution $\begin{pmatrix}\x\\z\end{pmatrix}$ of this problem induces a colorful solution $\x$ of~\eqref{P} with value smaller than $\mu$.

 Using binary search, we can thus solve the problem. At each step, we consider the problem~\eqref{eq:sys_feas} with $\mu=(\mu_{\inf}+\mu_{\sup})/2$. If the problem is feasible, we update $\mu_{\sup}:=\mu$, otherwise we update $\mu_{\inf}:=\mu$. At each step the size of $[\mu_{\inf},\mu_{\sup}]$ is hence divided by two. Using Cramer's rule, we know that there is a minimal gap $\varepsilon$, polynomially computable, between two values of $\c\cdot\x$ with $\x$ being a feasible basis of the system $$A\x=\b,\; \x\geq\zero.$$ Hence, after at most $\log_2(\frac{\mu_{\sup}-\mu_{\inf}}{\varepsilon})$ iterations, the feasible basis we obtain for the problem with $\mu=\mu_{\sup}$ is the optimum.
\end{proof}

\subsection{Colorful diet}

Colorful linear programming can also be used as a modeling tool, as in the following example suggested to us by Max Klimm.

The diet problem has been introduced during the Second World War and aimed at defining the daily diet of U.S. soldiers. Given a set of nutriments and a set of foods, each containing a certain amount of each nutriment, the problem is to find an optimal diet with respect to some objective function, such that each nutriment is sufficiently provided. It was one of the first problems on which the simplex algorithm was tested (in 1947~\cite{Dan63}). Later, in 1990, Dantzig showed the limits of this model in an over-viewing paper~\cite{Dan90}, in which he described how he tried to apply the model to his own diet. The main struggle he encountered was ``the lack of variety'' of the solutions given by the model.
Adding bounds, he managed to avoid solutions using only one food, for instance bran\footnote{Dantzig finally followed his wife's advice, which was certainly much more efficient than linear programming.}. 
In this context, variety is thus achievable by standard linear programming technique. However, if he had known colorful linear programming, he would have had an additional modeling flexibility. It might be undesirable to have many foods from a same category in a solution, e.g., for logistic reasons. In general, one does not eat more than three types of vegetables, two types of meat and one fruit each day, and so on. Colorful linear programming consists in solving a linear program whose variables are partitioned into categories and with the additional constraints that the number of variables of each category used in a solution is bounded. It allows thus to avoid a too high variety inside the categories.

Formally, the original diet problem is the following. We are given $n$ foods and $m$ nutriments. Let $a_{ij}\in\mathbb R_+$ be the quantity of nutriment $j$ in one unit of food $i$, let $b_j$ be the quantity of nutriment $j$ needed daily, and let $r_i$ be the maximal amount of food $i$ that can be tolerated in one day. Finally, let $c_i$ be the cost of one unit of food $i$. We define the variables $x_1,\ldots,x_n\in \mathbb R_+$, modeling the quantity of food $i$ that will be recommended by the diet program. The diet problem aims at solving
$$ \begin{array}{rl}
\min &\c\cdot\x\\
\mbox{s.t.} &A\x\geq\b\\
& \x\leq\r \\
&\x\geq\zero,
\end{array}$$
where $A$ is the matrix $(a_{ij})$.

Now, assume that the foods are partitioned into different categories, i.e., $\{1,\ldots,n\}=I_1\cup\cdots\cup I_k$, and that it is required to have at most $\ell_h$ distinct foods selected in a category $I_h$. This requirement can be taken into account by adding the following constraints to the previous linear program:
$$|\supp(\x)\cap I_h|\leq \ell_h\qquad\forall h\in\{1,\ldots,k\}.$$ We get what we call the {\em colorful diet problem}.

By making $\ell_h$ copies of $I_h$ and by adding the corresponding $(x_{i,j})_{j\in[\ell_h]}$ variables and by using slack variables, this model can be readily written under the form \eqref{P}. Proposition~\ref{prop:feas_opt} shows that the colorful diet problem is polynomially reducible to {\sc Colorful Linear Programming} in its conic version.

\section{Complexity of colorful linear programming}\label{sec:complexity}

In this section, we come back to the original {\sc Colorful Linear Programming} problem, discuss quickly its NP-completeness, and give an application.

\subsection{Proof of \textup{NP}-completeness}

For a fixed $q\in\mathbb Z$, we define {\sc CLP}$(q)$ to {\sc Colorful Linear Programming} with the additional constraint that $k-d=q$.

\begin{lemma}\label{lem1}
If {\sc CLP}$(q)$ is \textup{NP}-complete, then {\sc CLP}$(q-1)$ is also \textup{NP}-complete.
\end{lemma}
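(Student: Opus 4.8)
The plan is to produce a polynomial many-one reduction from {\sc CLP}$(q)$ to {\sc CLP}$(q-1)$ and to note that {\sc CLP}$(q-1)$ belongs to \textup{NP}; together these transfer \textup{NP}-completeness from $q$ to $q-1$.

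The reduction I would use simply lifts a configuration by one dimension. Given an instance of {\sc CLP}$(q)$, that is, sets $\S_1,\ldots,\S_k$ in $\Q^d$ with $k-d=q$, I would replace each point $\boldsymbol{v}\in\S_i$ by $(\boldsymbol{v},0)\in\Q^{d+1}$, obtaining new sets $\S_1',\ldots,\S_k'$ in $\Q^{d+1}$. The number of colors is unchanged while the dimension has grown by one, so this is an instance of {\sc CLP}$(q-1)$, and the transformation is obviously computable in polynomial time. Next I would check that it is answer-preserving. The map $(\boldsymbol{v},0)\mapsto\boldsymbol{v}$ is a color-respecting bijection from $\bigcup_{i=1}^k\S_i'$ onto $\bigcup_{i=1}^k\S_i$, so colorful sets of the new instance are in bijection with colorful sets of the old one. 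Moreover every point of $\bigcup_{i=1}^k\S_i'$ has last coordinate $0$, hence a convex combination of such points equals $\zero$ in $\Q^{d+1}$ precisely when the corresponding convex combination of the original points equals $\zero$ in $\Q^d$. Therefore a colorful set is positively dependent for the lifted instance if and only if its image is positively dependent for the original one, and the two instances receive the same answer.

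To finish, I would recall that {\sc CLP}$(q-1)$ lies in \textup{NP}: a positively dependent colorful set has polynomial size, and one verifies in polynomial time that it meets each color at most once and contains $\zero$ in its convex hull (a linear-programming feasibility test, or one may simply include the convex coefficients in the certificate). Combined with the reduction above, this gives the lemma. The only delicate point — and it is barely an obstacle — is justifying that appending a zero coordinate neither creates nor destroys positively dependent colorful sets, which is exactly the remark that all the new points lie in the hyperplane $\{x_{d+1}=0\}$.
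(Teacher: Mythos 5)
Your proof is correct and follows exactly the paper's route: embed the instance in one higher dimension by appending a zero coordinate, observe that colorful positively dependent sets correspond bijectively, and note membership in NP. The extra details you supply (the explicit bijection and the NP certificate) are fine elaborations of the same argument.
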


\begin{proof} 
Let $\S_1,\ldots,\S_k$ in $\R^d$ be an instance with $k=d+q$. Define $d'=d+1$. Embedding this instance in $\R^{d'}$ by adding a $d'$th component equal to $0$, we get an instance with $k=d'+q-1$, every solution of which provides a solution for the case $k=d+q$, and conversely. This latter case being NP-complete, we get the conclusion.
\end{proof}

\begin{lemma}\label{lem2}
If {\sc CLP}$(q)$ is \textup{NP}-complete, then {\sc CLP}$(q+1)$ is also \textup{NP}-complete.
\end{lemma}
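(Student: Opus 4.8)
The plan is to reduce in the opposite direction from Lemma~\ref{lem1}: given an instance of {\sc CLP}$(q)$ in $\R^d$ with $k = d+q$ color classes, we want to produce an instance of {\sc CLP}$(q+1)$ — that is, one in which the number of colors exceeds the dimension by $q+1$ — whose positively dependent colorful sets correspond to those of the original. The natural move is to \emph{lower} the dimension by one rather than to add a color, since adding a color risks creating spurious solutions. So I would project the configuration from $\R^d$ to $\R^{d-1}$, or equivalently intersect with a generic hyperplane, and keep $k$ the same; then $k = (d-1) + (q+1)$ as desired.

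\medskip

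\textbf{Key steps.} First, I would pick a linear functional $\ell \colon \R^d \to \R$ and consider the hyperplane $\H = \{x : \ell(x) = 1\}$. For each point $v$ in some $\S_i$ with $\ell(v) \neq 0$, replace $v$ by its central projection $v / \ell(v) \in \H$; identify $\H$ with $\R^{d-1}$. Points with $\ell(v) = 0$ need separate handling — one can either assume by a preliminary perturbation/normalization that they do not occur, or add a bookkeeping coordinate. The core claim is then a standard fact about conic vs.\ convex hulls: $\zero \in \conv(T)$ in $\R^d$ if and only if, after projection, $\zero$ lies in the convex hull of the projected points \emph{with the right sign pattern}. More precisely, $\sum \lambda_i v_i = \zero$ with $\lambda_i \ge 0$, $\sum\lambda_i = 1$, becomes $\sum \mu_i \,(v_i/\ell(v_i)) = \zero$ with $\mu_i = \lambda_i \ell(v_i)$; the subtlety is that the $\mu_i$ are only guaranteed nonnegative when $\ell(v_i) > 0$ for the points actually used. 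To make the equivalence clean, the right setup is to choose $\ell$ so that $\ell(v) > 0$ for \emph{every} point $v$ in the configuration (possible after translating along a direction not in the affine hull, or by a generic choice combined with replacing each point by a positive scaling — note positive scalings of points do not change which colorful sets are positively dependent). Then $\zero \in \conv(T) \iff \zero \in \conv(\{v/\ell(v) : v \in T\})$ exactly, colors are preserved, and size-$2$ classes stay size-$2$. Since $\ell$, the scalings, and the projection are all computable in polynomial time from rational data (choosing $\ell$ rational and the scalings rational), this is a polynomial reduction, and it preserves the ``each $\S_i$ of size $2$'' restriction so the strong form of NP-completeness transfers as well.

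\medskip

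\textbf{Main obstacle.} The delicate point is the sign issue in the projection: an honest positive dependency in $\R^d$ could in principle use points on both sides of (or on) the hyperplane $\H$, and then the projected coefficients are not all nonnegative, so the projected set need not be positively dependent — and conversely a positive dependency downstairs might pull back with mixed signs. The remedy, ensuring $\ell > 0$ on all points by an explicit preprocessing (translate so that $\zero$ is not in the affine hull, or add a constant coordinate and rescale), is conceptually simple but must be carried out carefully so that (i) ``contains $\zero$ in its convex hull'' is genuinely preserved in both directions, (ii) the data stays rational and polynomially bounded, and (iii) no color class changes size. Once that normalization is in place, the equivalence $\zero \in \conv(T) \iff \zero \in \conv(\text{projections})$ is the routine conic-hull computation sketched above, and the dimension drops by exactly one while $k$ is unchanged, giving {\sc CLP}$(q+1)$. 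I expect the write-up to spend most of its length on this normalization and on verifying the two directions of the correspondence, with the algebra itself being short.
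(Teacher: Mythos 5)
Your reduction goes in the wrong geometric direction, and the normalization it hinges on cannot exist on exactly the instances that matter. You want a linear functional $\ell$ with $\ell(v)>0$ for every point $v$ of the configuration; by separation such an $\ell$ exists only when $\zero\notin\conv\bigl(\bigcup_i\S_i\bigr)$. But any instance admitting a positively dependent colorful set $T$ (indeed any instance where the answer could be `yes') has $\zero\in\conv(T)\subseteq\conv\bigl(\bigcup_i\S_i\bigr)$, so no such $\ell$ exists there. The proposed remedies do not repair this: translating the configuration does not preserve which sets contain $\zero$ in their convex hull (so it changes the problem), and positive scalings, while harmless, do not alter the sign obstruction, since the signs $\ell(v)$ are unchanged up to positive factors. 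The failure is also visible downstream: after your central projection all images lie on the affine hyperplane $\{x:\ell(x)=1\}$, so no convex combination of them can equal $\zero$; the produced instance is automatically a `no'-instance, and the claimed equivalence $\zero\in\conv(T)\iff\zero\in\conv(\{v/\ell(v)\})$ could only hold vacuously. More structurally, the convex problem with target $\zero$ in $\R^d$ is equivalent to a conic problem one dimension up, not to a convex problem one dimension down; the conic-to-convex dictionary you invoke needs the side condition $\zero\notin\conv(\{\boldsymbol{p}\}\cup\bigcup_i\S_i)$, which is precisely what fails here.

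The paper instead moves in the opposite direction: embed the $\R^d$ instance into $\R^{d+1}$ by appending a zero coordinate, and add \emph{two} new color classes whose points all sit at $(0,\ldots,0,1)$. Any convex combination equal to $\zero$ must put zero weight on the new points (their last coordinate is $1$, the old points' is $0$), so positively dependent colorful sets of the new instance restrict to ones of the old, and conversely an old solution is already a colorful solution of the new instance (it simply omits the two new colors). This raises $k-d$ by one, keeps all classes of size $2$ if they were, and is trivially polynomial. If you want to salvage a dimension-based argument, that lifting/padding style is the one to pursue; the projection idea cannot be made to work.
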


\begin{proof}
Let $\S_1,\ldots,\S_k$ in $\R^d$ be an instance with $k=d+q$. Define $d'=d+1$ and $k'=k+2$. Embed this instance in $\R^{d'}$ by adding a $d'$th component equal to $0$. Add two sets $\S_{k+1}$ and $\S_{k+2}$ entirely located at coordinate $(0,\ldots,0,1)$. We have thus an instance with $k'=d'+q+1$, every solution of which provides a solution for the case $k=d+q$, and conversely. This latter case being NP-complete, we get the conclusion.
\end{proof}

We have the following proposition.

\begin{proposition}\label{prop:clpq}
{\sc CLP}$(q)$ is {\sc NP}-complete for any fixed $q\in\mathbb{Z}$.
\end{proposition}

In particular, it is NP-complete for $q=1$.

\begin{proof}[Proof of Proposition~\ref{prop:clpq}]
{\sc CLP}$(0)$ is NP-complete according to Theorem 6.1 in~\cite{BO97}. Lemmas~\ref{lem1} and~\ref{lem2} allow to conclude.
\end{proof}

Polynomially checkable sufficient conditions ensuring the existence of a positively dependent colorful set exist: the condition of the colorful \cara{} theorem is one of them. More general polynomially checkable sufficient conditions when $k=d+1$ are given in~\cite{AB+09,HPT08,MD12}. However, the fact that CLP$(1)$ is NP-complete implies that there are no polynomially checkable conditions that are simultaneously sufficient and necessary for a positively dependent colorful set to exist when $k=d+1$, unless P=NP.

\begin{remark}\label{rk:gen}
The instances built in the proof of Lemma~\ref{lem2} are not in general position, since $\zero$ and the $\S_i$'s with $i\leq k$ are all in a same hyperplane. 
We could wonder whether the case $k=d+1$ remains NP-complete when the points are in general position. The answer is yes, and we explain how to reduce the instance built in the proof of Lemma~\ref{lem2} to an instance in general position.

First, the sets $\S_{k+1}$ and $\S_{k+2}$ can be slightly perturbed without changing the conclusion. Second, we slightly move $\zero$ into one of the halfspaces delimited by the hyperplane containing the $\S_i$'s for $i\leq k$. We choose the halfspace containing $\S_{k+1}$ and $\S_{k+2}$. This move must be sufficiently small so that $\zero$ does not traverse another hyperplane generated by $d'$ points in $\bigcup_{i=1}^{k+2}\S_i$. All coordinates being rational, Cramer's formula allows to compute a length of the displacement that ensures this condition. Third, we move each point of the $\bigcup_{i=1}^{k}\S_i$ independently along a line originating from $\zero$.
\end{remark}


\subsection{Projection and colorful linear programming}

Algorithmic questions related to projecting polytopes are usually identified as difficult questions. Tiwary~\cite{Tiw12} recently showed that given two polytopes $\mathcal{Q}$ and $\mathcal{Q}'$ described by systems of linear inequalities, deciding whether $\mathcal{Q}$ is a projection of $\mathcal{Q}'$ is coNP-complete. Note that it is in coNP since deciding whether a partial solution of a system of linear inequalities can be extended to a full solution is a linear programming problem. His proof of coNP-completeness uses a reduction of the problem of deciding whether a polytope described by its facets is contained in a polytope described by its vertices, which is a coNP-complete problem~\cite{FrOr85}. {\sc Colorful Linear Programming} is another way to prove this result. 

Take any instance $\S_1,\ldots,\S_d,\boldsymbol{p}$ of the conic version of {\sc Colorful Linear Programming}, all points being in general position, with $\conv(\{\boldsymbol{p}\}\cup\bigcup_{i=1}^{d}\S_i)$ not containing $\zero$. Because of Proposition~\ref{prop:clpq} and Remark~\ref{rk:gen}, the problem of deciding whether there is colorful solution is NP-complete. We show how to reduce this instance to an instance of the aforementioned polytope projection problem. We define $A_i$ to be the matrix with the columns being the vectors in $\S_i$, for $i=1,\ldots,d$. We define then the following polytopes: $$\mathcal{P}=\left\{\boldsymbol{x}=(\boldsymbol{x}_1,\ldots,\boldsymbol{x}_d)\in\R_+^{d^2}:\,\sum_{i=1}^dA_i\boldsymbol{x}_i=\boldsymbol{p}\right\}$$ and $$\mathcal{P}_i=\left\{\boldsymbol{x}=(\boldsymbol{x}_1,\ldots,\boldsymbol{x}_d)\in\mathcal{P}:\,\boldsymbol{x}_i=\zero\right\}.$$ They are polytopes because of the assumption $\conv(\{\boldsymbol{p}\}\cup\bigcup_{i=1}^{d}\S_i)$ does not contain $\zero$.

There exists a colorful solution to {\sc Colorful Linear Programming} problem we consider if and only if $\mathcal{P}\setminus\conv(\bigcup_{i=1}^d\mathcal{P}_i)$ is nonempty. Indeed, if there exists a colorful solution to the {\sc Colorful Linear Programming} problem, the latter set in nonempty: a colorful solution provides a point (a basis in the linear programming terminology) in $\mathcal{P}$ with each $\boldsymbol{x}_i$ being nonzero, because of the general position assumption. Conversely, if the set is nonempty, there is a vertex of $\mathcal{P}$ not in $\conv(\bigcup_{i=1}^d\mathcal{P}_i)$, and such a vertex has exactly $d$ nonzero components, each corresponding to a column of a distinct $A_i$, and provides a solution to the {\sc Colorful Linear Programming} problem we consider.

Deciding whether $\mathcal{P}\setminus\conv(\bigcup_{i=1}^d\mathcal{P}_i)$ is nonempty is therefore NP-hard. We prove below that $\conv(\bigcup_{i=1}^d\mathcal{P}_i)$ is a projection of some higher dimensional polytope $\mathcal Q'$. Hence, deciding whether $\mathcal{P}\setminus\conv(\bigcup_{i=1}^d\mathcal{P}_i)$ is nonempty is equivalent to deciding whether $\mathcal{P}$ is the projection of $\mathcal Q'$. 

The polytope $\conv(\bigcup_{i=1}^d\mathcal{P}_i)$ is described by the solutions $\boldsymbol{x}=(\boldsymbol{x}_1,\ldots,\boldsymbol{x}_d)$ satisfying the following system of linear equalities and inequalities:
$$\left\{\begin{array}{ll}\ds{\sum_{j=1}^dA_i\boldsymbol{x}_{ij}-y_i\boldsymbol{p}} = \zero & \quad\forall i \\
\ds{\sum_{i=1}^d\boldsymbol{x}_{ij}} = \x_j & \quad\forall j \\
\ds{\sum_{i=1}^dy_i} = 1 
\\ \x_{ii} = \zero & \quad\forall i\\
\y\geq\zero & \quad\forall i\\
\x_{ij}\geq\zero & \quad\forall i,j.
\end{array}\right.$$

Indeed, a point $\boldsymbol{x}=(\boldsymbol{x}_1,\ldots,\boldsymbol{x}_d)\in\conv(\bigcup_{i=1}^d\mathcal P_i)$ is such that $\boldsymbol{x}=\sum_{i=1}^dy_i\boldsymbol{x}'_{i}$, with $\sum_{i=1}^dy_i=1$ and $\boldsymbol{x}'_i=(\boldsymbol{x}'_{i1},\ldots,\boldsymbol{x}'_{id})\in\mathcal P_i$ for each $i$. Defining $\boldsymbol x_{ij}$ to be $y_i\boldsymbol{x}'_{ij}$ shows that such an $\boldsymbol x$ satisfies the system. Conversely, a solution of the system induces a point $\boldsymbol{x}$ that can be written as $\sum_{i=1}^dy_i\boldsymbol{x}'_i$ with $\boldsymbol{x}'_i\in\mathcal P_i$ for all $i$. Indeed, define $\boldsymbol{x}'_{ij}=\frac{1}{y_i}\boldsymbol{x}_{ij}$ when $y_i\neq 0$, and $\boldsymbol{x}'_{ij}=\zero$ otherwise. In this latter case, all the $\boldsymbol{x}_{ij}$'s are equal to $\zero$ because of the assumption $\zero\notin\conv(\{\boldsymbol{p}\}\cup\bigcup_{i=1}^{d+1}\boldsymbol{S}_i)$.

\section{Special cases and analogues of colorful linear programming in combinatorics}\label{sec:combin}

We give in this section two combinatorial corollaries of the colorful \cara{} theorem. For each of them, we provide a direct proof. In both cases, we show that the colorful set can be computed in polynomial time, and get thus special polynomial cases of \ccp.

\begin{proposition}
Let $D=(V,A)$ be a directed graph with $n$ vertices. Let $C_1,\ldots,C_n$ be pairwise arc-disjoint circuits. Then there exists a circuit $C$ sharing at most one arc with each of these $C_i$. Moreover, such a circuit can be computed in polynomial time.
\end{proposition}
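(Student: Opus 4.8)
The statement is a purely combinatorial corollary of the colorful \cara{} theorem, so the natural approach is to encode the circuits as vectors in a suitable $\R^d$, check that the colorful hypothesis holds, and then read off what a positively dependent colorful set means in graph-theoretic terms. The right encoding is the standard one from network flows: work in $\R^V$ (dimension $n$), and to each arc $a=(u,w)\in A$ associate the vector $\boldsymbol{e}_u-\boldsymbol{e}_w$, the incidence vector of $a$ in the (transposed) incidence matrix of $D$. For a directed circuit $C_i$, the sum of the vectors of its arcs telescopes to $\zero$; equivalently, each arc-vector of $C_i$ is $\zero$ plus a positive combination of the others, so $\zero$ lies in the convex hull of $\{\,\boldsymbol{e}_u-\boldsymbol{e}_w : (u,w)\in C_i\,\}$ after normalizing the uniform weights $1/|C_i|$. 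Thus each color class $\S_i := \{\,\boldsymbol{e}_u-\boldsymbol{e}_w : (u,w)\in C_i\,\}$ is positively dependent. All these vectors live in the hyperplane $\{\boldsymbol{x} : \sum_v x_v = 0\}$, which is $(n-1)$-dimensional, so with $n$ color classes we are exactly in the $k=d+1$ regime (with $d=n-1$) and the colorful \cara{} theorem applies.

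The colorful \cara{} theorem then yields a set $T$ with $|T\cap\S_i|\le 1$ for all $i$ and $\zero\in\conv(T)$, i.e.\ a family of arcs, at most one from each $C_i$, carrying a positive flow that is conservative at every vertex (a circulation supported on $T$). The remaining step is to extract from such a conservative nonnegative flow an actual \emph{circuit}: any nonzero circulation decomposes into directed circuits, so at least one directed circuit $C$ is contained in the arc set underlying $T$; since $T$ meets each $C_i$ in at most one arc, so does $C$. This gives existence. One should note a minor degeneracy: a ``circuit'' of length one (a loop) or the trivial case where $T$ is empty cannot occur because each $\S_i$ is nonempty and the $C_i$ are arc-disjoint genuine circuits; and the extracted $C$ is a genuine circuit by the flow-decomposition lemma.

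For the polynomial-time claim, invoking the colorful \cara{} theorem as a black box is not enough, since its algorithmic status is exactly the open problem discussed in the introduction; so the plan is to give a \emph{direct} algorithm exploiting the graph structure. The key observation is that here the ``technical step'' of the \bara{}-Onn algorithm is trivial: to decide, given a colorful set $T$ of arc-vectors that does not contain $\zero$ in its convex hull, which facet separates it from $\zero$, one can instead argue combinatorially. Concretely, one can run a pivoting/flow-augmentation scheme: start by picking one arc from each $C_i$; this colorful arc set need not be a circulation, but one can repeatedly ``repair'' it—if the current selection of arcs supports no nonnegative circulation, a certificate (a potential function $\pi:V\to\R$, i.e.\ a separating hyperplane) identifies a color $i$ whose chosen arc $a_i$ has $\pi(u)-\pi(w)$ of the wrong sign, and since $\zero\in\conv(\S_i)$ there is another arc of $C_i$ on the correct side; swap to it. This is precisely the simplex-like pivot of Section~\ref{sec:algo} specialized to the incidence matrix, and on totally unimodular / network data each pivot and each separating-hyperplane computation is polynomial; termination in polynomially many pivots follows because the underlying linear program is a min-cost-circulation-type program on the network $D$, which is solvable in polynomial time (e.g.\ one may simply set up the LP ``find a nonnegative circulation $\boldsymbol{f}$ with $f_a>0$ for at least one arc, using at most one arc per $C_i$'' — after fixing which arc is allowed from each $C_i$ this is a feasibility LP, and a vertex solution has small support from which a circuit is read directly). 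Alternatively, and most cleanly: build the auxiliary digraph $D'$ on $V$ whose arc set is one representative arc per $C_i$ chosen greedily, and search for any directed circuit in $D'$; if none exists, $D'$ is acyclic, giving a topological order that serves as the potential $\pi$ to guide a swap; repeat.

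\textbf{Main obstacle.} The conceptual content is routine, so the real work is the polynomiality and termination of the repair/pivot procedure: one must argue that the sequence of arc-swaps does not cycle and halts in polynomially many steps. I expect the clean way around this is to avoid iterating by hand and instead reduce directly to a polynomially solvable LP or to an explicit min-cost flow instance (for instance, assign cost $0$ to every arc, demand that the total flow value across a fixed cut be at least $1$, and restrict to at most one arc per $C_i$ by a preliminary choice), so that a single polynomial-time flow computation produces a vertex solution whose support already contains the desired circuit — sidestepping the pivoting analysis entirely.
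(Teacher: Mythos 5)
Your existence argument is sound, but note it is exactly the Frank--Lov\'asz derivation that the paper cites as already known (encode each arc $(u,w)$ as $\boldsymbol{e}_u-\boldsymbol{e}_w$, observe each circuit gives a positively dependent color class in the hyperplane $\sum_v x_v=0$ of dimension $n-1$, apply the colorful \cara{} theorem, and extract a circuit from the resulting colorful circulation by flow decomposition); the paper instead proves existence directly, and that is not where the issue lies. The genuine gap is in the polynomial-time claim. Your ``cleanest'' route --- a single LP or min-cost-flow computation with the restriction ``at most one arc per $C_i$'' imposed ``by a preliminary choice'' --- does not work: colorfulness is a support constraint, not a linear one, and once you fix which arc of each $C_i$ is permitted you are left with a feasibility LP whose feasibility depends entirely on that choice (the chosen arc set may be acyclic), while finding a feasible choice among the exponentially many is precisely the problem to be solved. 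Dropping the restriction does not help either: a vertex of the circulation polytope on $\bigcup_i C_i$ is supported on a single circuit, but since the $C_i$ are only arc-disjoint (not vertex-disjoint) that circuit may use several arcs of one $C_i$. Your fallback, the pivoting/repair scheme guided by topological orders or reduced costs, is exactly the part you flag as the ``main obstacle,'' and no termination or polynomial bound is supplied; none is known for such colorful pivoting even on structured data (the paper itself reports cycling for its simplex-like variant), so the second half of the proposition remains unproven in your write-up.

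For comparison, the paper's proof is a short self-contained combinatorial argument that delivers existence and polynomiality simultaneously: form the bipartite graph between $V$ and the index set $\{1,\ldots,n\}$ with an edge $vi$ when $v$ lies on $C_i$. If Hall's condition holds, a perfect matching lets one pick, for each vertex $v$, an arc of $\delta^-(v)$ from a distinct $C_i$; the selected subgraph has all in-degrees equal to one, hence contains a circuit, and that circuit meets each $C_i$ in at most one arc. If Hall's condition fails, a maximum-matching computation exhibits a deficient set $X\subseteq V$, which is removed, and one inducts on $D-X$ with the circuits avoiding $X$. Everything reduces to bipartite matching, which is polynomial. If you want to salvage your approach, you would need to replace the vague LP/pivoting step by an argument of this combinatorial kind (or prove a polynomial bound for your specific pivoting rule on network matrices), since invoking colorful linear programming machinery reintroduces exactly the open problem the proposition is meant to bypass.
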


The existence of the colorful circuit as a consequence of the colorful \cara{} theorem has already been noted and is attributed to Frank and Lov\'asz~\cite{Bar82}.

\begin{proof}
We consider the bipartite graph with vertex classes $V$ and $\{1,\ldots,n\}$, and in which edge $vi$ exists if the vertex $v$ belongs to $C_i$. If each $X\subseteq V$ touches at least $|X|$ distinct colors, Hall's marriage theorem ensures the existence of a perfect matching in the bipartite graph. We can thus select for each vertex $v\in V$ an arc $a$ in $\delta^{-}(v)$ belonging to a distinct $C_i$. The subgraph induced by these arcs contains a circuit $C$ as required.

Otherwise, there is a subset $X\subseteq V$ with a neighborhood in the bipartite graph of cardinality at most $|X|-1$. One can remove $X$ from $D$ and apply induction. Note that the existence of such an $X$ can be decided in polynomial time by a classical maximum matching algorithm, which provides also the set $X$ itself if it exists.
\end{proof}

The existence statement of the next proposition is a consequence of the conic version of the colorful \cara{} theorem. We provide a direct proof based on a greedy algorithm.

\begin{proposition}\label{prop:path}
Let $D=(V,A)$ be a directed graph with $n$ vertices. Let $s$ and $t$ be two vertices, and $P_1,\ldots,P_{n-1}$ be pairwise arc-disjoint $s$-$t$ paths. Then there exists an $s$-$t$ path $P$ sharing at most one arc with each $P_i$. Moreover, such a path can be computed in polynomial time.
\end{proposition}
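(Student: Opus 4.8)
The plan is to prove the statement directly by a greedy construction, as the proposition announces, mirroring the structure of the previous proof (where Hall's theorem was used) but exploiting the fact that we now seek a path rather than a circuit. First I would set up the natural bipartite-type incidence structure: since the $P_i$ are $s$-$t$ paths, each internal vertex $v \in V \setminus \{s,t\}$ has at least one out-arc used by some $P_i$, and I want to assign to each such $v$ a single out-arc coming from a distinct index $i$, together with a rule that lets me actually trace an $s$-$t$ walk. The key observation is that there are $n-1$ paths and $n-2$ internal vertices (or $n-1$ vertices $V\setminus\{t\}$ and $n-1$ paths), so a counting/Hall-type argument will give a system of distinct representatives assigning to each vertex in $V \setminus \{t\}$ an out-arc lying on a distinct $P_i$.

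The main step is then to argue that following these chosen out-arcs starting from $s$ produces an $s$-$t$ path (or at least a walk from which an $s$-$t$ path can be extracted) that meets each $P_i$ in at most one arc. Starting at $s$, take its chosen out-arc; this lands at some vertex $v_1 \ne t$ (or $=t$, in which case we are done); take $v_1$'s chosen out-arc, and so on. Because each vertex in $V\setminus\{t\}$ has exactly one chosen out-arc and there are finitely many vertices, this walk cannot go on forever without revisiting a vertex; but if it revisited a vertex $v \ne t$ before reaching $t$, it would have to re-use $v$'s unique chosen out-arc, so the walk would be eventually periodic and never reach $t$ — I need to rule this out. The clean way is to note that the union of the chosen arcs is a subgraph in which every vertex of $V\setminus\{t\}$ has out-degree exactly $1$ and $t$ has out-degree $0$, hence it is a "functional graph" pointing toward $t$: if any vertex were on a cycle, that cycle would avoid $t$ and I could delete its vertex set and induct on $n$, exactly as in the circuit proposition (the deleted set gives fewer vertices and the paths restricted appropriately still work, or — more carefully — a cycle in the chosen-arc subgraph already yields a contradiction with the Hall deficiency version). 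Once there is no cycle, the functional graph is a forest of in-trees rooted at $t$, so from $s$ the chosen arcs trace a genuine $s$-$t$ path $P$; since each chosen arc lies on a distinct $P_i$, $P$ shares at most one arc with each $P_i$.

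For the polynomial-time claim, the SDI / Hall condition is checkable and the representative assignment constructible by a bipartite maximum-matching algorithm; alternatively — and this is likely the paper's intended greedy — one can build $P$ directly by walking from $s$, at each vertex picking any unused out-arc belonging to a not-yet-hit $P_i$, and argue this never gets stuck before reaching $t$ by the same out-degree counting. Either way the running time is polynomial. The step I expect to be the real obstacle is the no-cycle / reaching-$t$ argument: one has to be careful that the greedy or matching-based choice of out-arcs genuinely forces termination at $t$ rather than trapping the walk in a cycle, and getting the right induction (on $n$, deleting a vertex set that is "short-circuited") or the right deficiency-version of Hall is where the care is needed. The counting is tight precisely because there are $n-1$ paths for $n-1$ relevant vertices, so there is no slack to waste.
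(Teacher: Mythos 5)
Your proposal takes a genuinely different route from the paper, and it has a real gap at exactly the point you yourself flag. For reference, the paper's proof is a short cut-based greedy: start with $X=\{s\}$, and at step $i$ note that $P_i$ starts at $s\in X$ and ends at $t\notin X$, hence contains an arc of $\delta^+(X)$; add such an arc (and its head) to a growing arborescence rooted at $s$. A fresh path is available at each step and $|X|$ grows by one each time, so $t$ is absorbed within $n-1$ steps, and the $s$-$t$ path inside the arborescence uses at most one arc of each $P_i$, since the arc added at step $i$ lies on $P_i$ and, by arc-disjointness, on no other path. This construction is acyclic by design and needs neither Hall's theorem nor any analysis of a functional graph.

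Your SDR-based plan leaves two things unproved. First, Hall's condition for assigning to every vertex of $V\setminus\{t\}$ an out-arc on a distinct path can simply fail (a vertex lying on no $P_i$, or several vertices lying only on one common path); the counting ``$n-1$ paths for $n-1$ vertices'' says nothing about subsets, and the deficiency-plus-induction repair is only gestured at. Second, and more seriously, even when a full SDR exists, the chosen-arc subgraph can contain cycles, and nothing in your argument shows that the walk from $s$ cannot be absorbed into such a cycle before reaching $t$. Both of your suggested fixes fail as stated: deleting the vertex set of a cycle can destroy every path passing through it, which may far exceed the number of deleted vertices, so the inductive hypothesis (at least $n'-1$ arc-disjoint $s$-$t$ paths on $n'$ vertices) need not survive, unlike the deficient-set deletion in the circuit proposition; and a cycle in the chosen-arc subgraph is perfectly compatible with Hall's condition (two vertices whose representatives point at each other while the representative of $s$ goes elsewhere), so there is no ``deficiency contradiction'' to invoke. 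The alternative greedy you mention at the end (walk vertex by vertex from $s$, picking any unused out-arc on a fresh path) has the same unaddressed termination problem, since a vertex-by-vertex walk can cycle; the paper's trick is precisely to grow a \emph{set} and use the cut $\delta^+(X)$, which every remaining $s$-$t$ path must cross, to guarantee progress at every step.
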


\begin{proof}
We build progressively an arborescence rooted at $s$. We start with $X=\{s\}$. At each step, $X$ is the set of vertices reachable from $s$ in the partial arborescence. At step $i$, if $X$ does not contain $t$, choose an arc $a$ of $P_i$ belonging to $\delta^+(X)$ and add to $X$ the endpoint of $a$ not yet in $X$. This arc exists since by direct induction $X$ is of cardinality $i$ at step $i$ and the $s$-$t$ path $P_i$ leaves $X$. 
\end{proof}

We end the paper by a brief survey of matroidal counterparts of colorful linear programming. The next proposition is common knowledge in combinatorics. It is a matroidal version of the colorful \cara{} theorem (with an additional algorithmic result).

\begin{proposition}
Let $M$ be a matroid of rank $d$. Assume that the elements of $M$ are colored in $d$ colors. If there exists a monochromatic basis in each color, then there exists a colorful basis and this latter can be found by a greedy algorithm.
\end{proposition}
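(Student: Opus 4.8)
The plan is to prove the statement by a direct greedy construction that processes the $d$ colors one at a time. Write $E$ for the ground set of $M$, let $r(\cdot)$ be its rank function, and for $S\subseteq E$ let $\operatorname{cl}(S)=\{e\in E:\,r(S\cup\{e\})=r(S)\}$ be the closure (span) of $S$. Denote by $E_c$ the set of elements of color $c$. The hypothesis that color $c$ contains a monochromatic basis means exactly that $r(E_c)=d$, equivalently $\operatorname{cl}(E_c)=E$.

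The greedy algorithm maintains an independent set $I$, starting from $I=\emptyset$, and at step $i=1,\ldots,d$ it picks \emph{any} element $e_i\in E_i$ with $e_i\notin\operatorname{cl}(I)$ and updates $I\leftarrow I\cup\{e_i\}$. First I would prove, by induction on $i$, the invariant that after step $i$ the set $I=\{e_1,\ldots,e_i\}$ is independent and contains exactly one element of each color $1,\ldots,i$. The colorful part is immediate from the construction. Independence follows from the standard matroid fact that, for an independent set $I$ and $e\notin I$, the set $I\cup\{e\}$ is independent if and only if $e\notin\operatorname{cl}(I)$; this is precisely the property the algorithm enforces.

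The only thing to check is that a valid $e_i$ exists at each step. Before step $i$ we have $\lvert I\rvert=i-1\le d-1$, so $I$ is not spanning and $\operatorname{cl}(I)$ is a proper subset of $E$ with $r(\operatorname{cl}(I))=\lvert I\rvert\le d-1$. If we had $E_i\subseteq\operatorname{cl}(I)$, monotonicity of the rank would give $d=r(E_i)\le r(\operatorname{cl}(I))\le d-1$, a contradiction; hence there is $e_i\in E_i\setminus\operatorname{cl}(I)$, as needed. After step $d$ the set $I$ is independent of cardinality $d=r(M)$, hence a basis, and it is colorful; this proves existence. For the algorithmic claim, step $i$ simply scans $E_i$ and uses an independence (or rank) oracle to test whether adding each candidate keeps $I$ independent, stopping at the first success; this costs at most $\lvert E\rvert$ oracle calls per step, hence $O(d\lvert E\rvert)$ calls overall, so a colorful basis is produced in polynomial time.

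I do not expect a real obstacle here: the result is the matroidal counterpart of the colorful \cara{} theorem, and its whole content is the one-line spanning argument of the previous paragraph showing that the greedy step never gets stuck. The only subtlety worth spelling out explicitly is the extension rule $I\cup\{e\}$ independent $\iff e\notin\operatorname{cl}(I)$, since it is what turns the global search for a colorful basis into a sequence of $d$ purely local choices, and thereby yields the polynomial-time bound. (Alternatively, one could quote Rado's theorem on independent transversals: since every union of $k\le d$ of the classes $E_c$ has rank $d\ge k$, an independent transversal of $(E_1,\ldots,E_d)$ exists, and being of size $d=r(M)$ it is a basis; but the greedy argument above is self-contained and directly produces the claimed algorithm.)
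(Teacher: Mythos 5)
Your proof is correct: the exchange/closure argument showing the greedy step never gets stuck (if $E_i\subseteq\operatorname{cl}(I)$ then $r(E_i)\le |I|\le d-1$, contradicting $r(E_i)=d$) is exactly the standard argument, and it matches the greedy-algorithm approach the proposition itself announces. The paper states this result without proof, as common knowledge, so your write-up simply supplies the omitted routine details; nothing is missing or different in substance.
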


A matroidal version of the Octahedron lemma stated in Section~\ref{subsec:octahedral} also exists. It is due to Magnanti~\cite{Ma74}.

\begin{proposition}\label{prop:matroid_another}
Let $M$ be a matroid of rank $d$ with no loops. Assume that the elements of $M$ are colored in $d$ colors and that the number of elements colored in each color is at least two. If there is a colorful basis, then there is another colorful basis and this latter can then be found in polynomial time.
\end{proposition}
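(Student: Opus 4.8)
The plan is to exploit the exchange structure of matroids directly, mimicking the pivoting argument behind the Octahedron lemma but in the combinatorial (rather than geometric) setting. Let $B$ be the given colorful basis, so $B=\{e_1,\ldots,e_d\}$ with $e_i$ of color $i$. For each color $i$, since the class of color~$i$ has at least two elements, pick $f_i\neq e_i$ of color $i$. The idea is to set up a ``path-following'' process on colorful bases: starting from $B$, at each step we have a set that is either a colorful basis or a ``colorful basis with one color missing and one color doubled,'' and there is a well-defined way to move to a neighboring such set; the two ends of the maximal path starting at $B$ are colorful bases, and they are distinct by a parity/orientation argument. This is exactly the combinatorial shadow of the Lemke–Howson / Bárány–Onn pivot, and since it is entirely matroid-theoretic it runs in polynomial time using an independence oracle (here, a polynomial-time one, since $M$ is given explicitly).

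Concretely, first I would reduce to the case where, in fact, each color class has size exactly two — delete all other elements of color $i$ except $e_i$ and $f_i$; if no second colorful basis survives the deletion we will have to be a bit more careful, so instead I would keep all elements but only ever pivot among a controlled collection. The cleaner route: fix the first color, say color $1$, and consider the collection $\mathcal{F}$ of independent sets $I$ of size $d$ that contain exactly one element of each color $j\geq 2$ and at most... — actually the slick formulation is via the basis exchange graph. Build the graph $G$ whose vertices are the colorful bases together with the ``almost colorful'' independent sets of size $d-1$ obtained from a colorful basis by deleting its color-$1$ element; join a colorful basis $B'$ to the $d-1$ set $B'\setminus\{(\text{color }1\text{ element})\}$, and join each $(d-1)$-set $J$ to every colorful basis of the form $J\cup\{g\}$ with $g$ of color $1$. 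A $(d-1)$-set $J$ as above has exactly two completions $J\cup\{g\}$, $J\cup\{g'\}$ by color-$1$ elements that are bases — this is where the matroid axioms and the hypothesis that color $1$ has $\geq 2$ elements forming (with the monochromatic-basis structure, or rather with the existence of the colorful basis) two valid completions is used — so every $(d-1)$-vertex has degree exactly $2$, and every colorful-basis vertex has degree exactly $1$. Hence $G$ is a disjoint union of paths and cycles, the colorful bases are precisely its degree-$1$ vertices, they come in pairs (the two ends of each path), and there are no isolated vertices; following the path from $B$ lands on another colorful basis $B'\neq B$.

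The step I expect to be the main obstacle is establishing the ``exactly two completions'' claim cleanly, i.e.\ that for a $(d-1)$-set $J$ arising in the graph, the set of color-$1$ elements $g$ with $J\cup\{g\}$ independent, intersected with ``is a valid $\mathcal{F}$-vertex,'' has size exactly $2$ — this needs the loop-free hypothesis (so that a single color-$1$ element is never already spanned spuriously) and a careful accounting of which $(d-1)$-sets actually occur (only those reachable from a colorful basis). One must also handle the degenerate possibility that the path from $B$ is a single edge or that $B$ itself sits on a cycle — the degree count rules the latter out, and the former cannot happen because a $(d-1)$-vertex has degree $2$. Once the graph is shown to be a union of paths and cycles with the colorful bases as its endpoints, the algorithmic claim is immediate: each neighbor in $G$ is computable by $O(|M|)$ independence-oracle calls, the path has at most $2\binom{|M|}{d}$ vertices but — and this is the subtle point needing the orientation/parity argument rather than brute force — we must argue the walk from $B$ terminates at the \emph{other} end without revisiting $B$, which follows since in a simple path the start vertex is not re-encountered; so the walk halts after at most a polynomial number of steps in the size of the ground set, no: in general the number of colorful bases can be exponential, so to get a genuinely polynomial algorithm I would instead invoke the matroid-union / matroid-intersection machinery, rephrasing ``another colorful basis'' as finding a second common independent set and using augmenting-path methods, which is the standard polynomial route (Magnanti~\cite{Ma74}); the path-following argument above is then what proves \emph{existence}, and matroid intersection provides the \emph{polynomial-time} construction.
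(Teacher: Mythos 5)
There is a genuine gap: the central claim of your path-following construction --- that every ``almost colorful'' $(d-1)$-set occurring in the graph has exactly two completions by color-$1$ elements to a colorful basis --- is false, and with it the whole parity argument. A $(d-1)$-set $J=B'\setminus\{e_1\}$ only tells you that $e_1\notin\operatorname{cl}(J)$; the other color-$1$ elements may well lie in $\operatorname{cl}(J)$ (degree $1$), or several may lie outside it (degree $\geq 3$ when the class has more than two elements). Concretely, take $d=2$, color classes $\{x_1,y_1\}$ and $\{x_2,y_2\}$, no loops, with $x_1$ parallel to $y_2$ and $y_1$ parallel to $x_2$ and no other dependencies: starting from the colorful basis $B=\{x_1,x_2\}$, \emph{every} single colorful exchange (in either color) produces a dependent pair, so your walk halts at once, yet $\{y_1,y_2\}$ is another colorful basis --- it differs from $B$ in both colors. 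Worse, no argument of this parity type can be repaired: if instead only $x_1\parallel y_2$ holds, the colorful bases are exactly $\{x_1,x_2\},\{y_1,x_2\},\{y_1,y_2\}$, i.e.\ their number can be odd. The evenness behind the geometric Octahedron lemma needs the general position hypothesis, and the matroid setting has no analogue of it, so the Lemke--Howson/B\'ar\'any--Onn ``shadow'' simply does not transfer; this is precisely why the matroid statement requires a different proof.

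Your closing fallback --- invoke Magnanti and matroid intersection --- is in fact the paper's entire proof (the paper gives no argument beyond attributing the result to Magnanti, whose proof rests on Lawler's matroid intersection algorithm). But as written your text still hangs the \emph{existence} of a second colorful basis on the broken path-following argument, and the algorithmic part is only gestured at: to make it precise one would note that colorful bases are the common bases of $M$ and the partition matroid induced by the colors, that any second colorful basis must omit some element $e$ of the given basis $B$, and that it therefore suffices to run a matroid intersection (maximum common independent set) computation on $M\setminus e$ for each $e\in B$; existence itself still needs Magnanti's theorem or an actual proof, which your proposal does not supply.
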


The proof by Magnanti is based on the matroid intersection algorithm due to Lawler~\cite{La75}. The same algorithm shows that the matroidal version of {\sc Colorful Linear Programming}, namely deciding whether there is a colorful basis in a matroid whose elements are colored, is polynomial.

\begin{remark}[Back to Sperner's lemma]\label{rk:sperner_comp} Remark~\ref{rk:sperner} of Section~\ref{subsec:bimatrix} shows that even a very special case of Sperner's lemma already leads to a PPAD-complete problem. The matroidal counterpart of the Octahedron lemma implies that the problem becomes polynomial when the triangulation is the boundary of the {\em cross-polytope}. The cross-polytope is the convex hull of the vectors of the standard orthonormal basis and their negatives.
\end{remark}

\begin{proposition}
Let $\mathsf{T}$ be the boundary of the $(d+1)$-dimensional cross-polytope and let $\lambda:V(\mathsf{T})\rightarrow\{1,\ldots,d+1\}$ be any labeling. Assume given a fully-labeled simplex. Another fully-labeled simplex can be computed in polynomial time.
\end{proposition}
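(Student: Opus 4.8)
The idea is to exploit the explicit description of the cross-polytope triangulation and turn the problem into a matroid question, so that Proposition~\ref{prop:matroid_another} applies. Recall that the $(d+1)$-dimensional cross-polytope has $2(d+1)$ vertices, which come in $d+1$ antipodal pairs $\{+e_i,-e_i\}$, and its boundary facets are exactly the simplices obtained by choosing one vertex from each pair. Thus a facet of $\mathsf{T}$ is the same thing as a ``colorful'' selection with respect to the partition of the vertex set into the $d+1$ antipodal pairs; a facet is fully-labeled (with respect to $\lambda$) precisely when the $d+1$ chosen vertices receive the $d+1$ distinct labels of $\{1,\ldots,d+1\}$. So a fully-labeled simplex of $\mathsf{T}$ is a system of distinct representatives: for each antipodal pair we pick one vertex, and the resulting labels are all different.

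First I would set up the matroid: let $M$ be the transversal matroid (equivalently, the rank-$(d+1)$ partition/free matroid won't do — we want the structure where the antipodal pairs are the ``color classes'' and the labels are the ground elements). Concretely, take the ground set to be the $2(d+1)$ vertices of $\mathsf{T}$, and color each vertex $v$ by $\lambda(v)\in\{1,\ldots,d+1\}$; impose the matroid $M$ on this ground set to be the one whose independent sets are the subsets containing at most one vertex from each antipodal pair (this is the partition matroid of the antipodal pairing), of rank $d+1$. Then a ``colorful basis'' of $M$ in the sense of Proposition~\ref{prop:matroid_another} — a basis of $M$ using each color (label) at most once, hence exactly once since rank $=$ number of colors — is exactly a fully-labeled facet of $\mathsf{T}$. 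The hypotheses of Proposition~\ref{prop:matroid_another} are met: $M$ has rank $d$ in its numbering convention (with $d+1$ colors here, so one just relabels $d\leftarrow d+1$), it has no loops (every vertex is independent by itself), each color class has size at least two (each label is used by at least one of the two members of some pair — wait, this needs care), and we are handed one colorful basis, namely the given fully-labeled simplex.

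The point needing care, and the main obstacle, is verifying the hypothesis ``the number of elements colored in each color is at least two.'' This is not automatic: a priori some label might be used by only a single vertex among the $2(d+1)$. I would handle this exactly as one handles degenerate cases in parity arguments: if some label $\ell$ appears on only one vertex $v$, then that vertex $v$ must belong to \emph{every} fully-labeled facet, so we may contract it — delete $v$ and its antipode from the complex, pass to the link of $v$ (which is again the boundary of a $d$-dimensional cross-polytope), delete label $\ell$, and recurse. Each such contraction strictly decreases the dimension and is polynomial-time, so after at most $d+1$ steps we either terminate on a trivial instance or reach an instance in which every remaining label is used at least twice, where Proposition~\ref{prop:matroid_another} (via Lawler's matroid intersection / Magnanti's argument) delivers a second colorful basis in polynomial time; unwinding the contractions gives the desired second fully-labeled simplex of the original $\mathsf{T}$. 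The only remaining routine checks are that the link of a cross-polytope vertex is again a cross-polytope boundary (clear from the antipodal product structure) and that the matroid-intersection/greedy routines run in time polynomial in $d+1$, which is immediate since the ground set has size $2(d+1)$.
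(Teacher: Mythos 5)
Your proof is correct and follows essentially the same route as the paper's: repeatedly eliminate a label that occurs only once by deleting the corresponding vertex and its antipode (dropping to a lower-dimensional cross-polytope), then apply Proposition~\ref{prop:matroid_another} to the partition matroid of antipodal pairs with the labels as colors. The brief initial hesitation about transversal versus partition matroid is resolved correctly in your concrete construction, so there is no gap.
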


\begin{proof} If a vertex has a label that appears only once on $V(\mathsf{T})$, we remove it and its antipodal, and work on the boundary of a cross-polytope with a dimension smaller by one. Solving this new problem leads to a solution for the original problem. We repeat this process until each label appears exactly twice.  Now, note that the simplices of the boundary of a cross-polytope form the independent sets of a matroid (it is a partition matroid). Considering the labels as colors, the conclusion follows then from Proposition~\ref{prop:matroid_another}.
\end{proof}

With a similar proof (omitted), we also have the following proposition.

\begin{proposition}
Let $\mathsf{T}$ be the boundary of the $(d+1)$-dimensional cross-polytope and let $\lambda:V(\mathsf{T})\rightarrow\{1,\ldots,d+1\}$ be any labeling. Deciding whether there is a fully-labeled simplex can be done in polynomial time. Moreover, if there is such a fully-labeled simplex, it can be found in polynomial time as well.
\end{proposition}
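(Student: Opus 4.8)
The plan is to reduce the decision problem on the cross-polytope to a matroidal decision problem that was already asserted to be polynomial, namely deciding whether a matroid whose elements are colored admits a colorful basis. The first step is to recall that the facets of the boundary of the $(d+1)$-dimensional cross-polytope are exactly the sets $\{\pm e_1,\ldots,\pm e_{d+1}\}$ that pick one of the two antipodal vertices in each of the $d+1$ coordinate directions; equivalently, the simplices of $\mathsf{T}$ (of all dimensions) are precisely the independent sets of a partition matroid $M$ of rank $d+1$ on the $2(d+1)$ vertices, where the blocks are the antipodal pairs. A simplex of $\mathsf{T}$ is fully-labeled if and only if it is a basis of $M$ all of whose elements receive pairwise distinct labels, i.e.\ a colorful basis when we view $\lambda$ as a coloring of the ground set of $M$ into $d+1$ colors. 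So the question ``is there a fully-labeled simplex?'' is literally an instance of the matroidal {\sc Colorful Linear Programming} problem for a partition matroid.

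Next I would invoke the fact, stated in the excerpt right after Proposition~\ref{prop:matroid_another}, that deciding whether a colored matroid has a colorful basis is polynomial (via Lawler's matroid intersection algorithm): one intersects $M$ with the partition matroid induced by the color classes and checks whether the maximum-size common independent set has size equal to the rank. This immediately answers the decision question in polynomial time. For the constructive part, matroid intersection algorithms are themselves constructive: when the maximum common independent set has full size, the algorithm outputs such a set, which is exactly the desired fully-labeled simplex. Alternatively, one can argue more elementarily: both matroids here are partition matroids, so a colorful basis, when it exists, can be built greedily — process the color classes (or the antipodal pairs) one at a time, and a routine Hall-type / exchange argument shows a greedy choice never gets stuck once the decision test has passed.

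The only mild subtlety, and the step I would be most careful about, is confirming the exact correspondence between fully-labeled simplices of the triangulation and colorful bases of the partition matroid — in particular that a maximal simplex of $\mathsf{T}$ has exactly $d+1$ vertices, one per antipodal pair, matching the rank, and that ``fully-labeled'' (distinct labels on all vertices, with $|E| = d+1$ labels available) forces a bijection between vertices and labels, hence ``colorful basis'' in the matroid sense. Once this dictionary is nailed down, everything else is a direct appeal to the polynomiality of matroidal colorful linear programming, so there is no real obstacle; the proof is short. (This mirrors exactly the structure of the preceding proposition, where the same partition-matroid observation was combined with Proposition~\ref{prop:matroid_another} instead of with the decision/search version.)
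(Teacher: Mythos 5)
Your proof is correct and matches the paper's (omitted, ``similar'') argument: the paper likewise identifies the simplices of the boundary of the cross-polytope with the independent sets of a partition matroid and invokes the polynomiality of the matroidal colorful-basis problem via Lawler's matroid intersection algorithm, which is constructive and hence also yields the fully-labeled simplex when one exists. One small caveat: your ``alternatively'' greedy remark is shaky as stated, since a blind greedy can get stuck even when a colorful basis exists (e.g.\ antipodal pairs $\{a_1,a_2\},\{b_1,b_2\}$ with labels $\lambda(a_1)=1,\lambda(a_2)=2,\lambda(b_1)=\lambda(b_2)=1$: picking $a_1$ first blocks the pair $\{b_1,b_2\}$), so the greedy would need a feasibility re-check at each step; but this is only an aside and the main matroid-intersection argument stands.
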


\bigskip

\noindent{\bf Acknowledgments.} The authors thank the reviewers for their helpful comments.

\bibliographystyle{amsplain}
\bibliography{refs}

\end{document}